\newtheorem{example}{Example}[section]
\newtheorem{remark}{Remark}[section]
\newtheorem{definition}{Definition}[section]
\newtheorem{theorem}{Theorem}[section]
\newtheorem{lemma}[theorem]{Lemma}
\newtheorem{claim}[theorem]{Claim}
\newcommand{\Omit}[1]{}
\definecolor{WildStrawberry}{RGB}{255,67,164}
\newcommand{\be}{\begin{equation}}
\newcommand{\ee}{\end{equation}}
\newcommand{\beq}{\begin{equation*}}
\newcommand{\eeq}{\end{equation*}}
\newcommand{\eps}{\varepsilon}
\newcommand{\vmax}{v_\text{max}}
\newcommand{\AutoAdjust}[3]{\mathchoice{ \left #1 #2  \right #3}{#1 #2 #3}{#1 #2 #3}{#1 #2 #3} }
\newcommand{\Xcomment}[1]{{}}
\newcommand{\InBrackets}[1]{\AutoAdjust{[}{#1}{]}}
\newcommand{\Ex}[2][]{\operatorname{\mathbf E}_{#1}\InBrackets{#2}}
\newcommand{\Prx}[2][]{\operatorname{\mathbf{Pr}}_{#1}\InBrackets{#2}}
\newcommand{\eqdef}{\overset{\mathrm{def}}{=\mathrel{\mkern-3mu}=}}
\newcommand{\vect}[1]{\ensuremath{\mathbf{#1}}}
\newcommand{\RN}[1]{%
  \textup{\uppercase\expandafter{\romannumeral#1}}%
}
\newcommand\restr[2]{{
  \left.\kern-\nulldelimiterspace 
  #1 
  \vphantom{\big|} 
  \right|_{#2} 
  }}
\def\prob{\Prx}
\newcommand{\alg}{\textsf{ALG}}
\newcommand{\opt}{\textsf{OPT}}
\newcommand{\idr}[1]{{\mathbbm{1}({#1})}}
\newcommand{\vals}{\vec{v}}
\newcommand{\dist}{\mathbf{F}}
\newcommand{\dists}{\vect{\dist}}
\begin{document}

\title{``Who is Next in Line?''\\
On the Significance of Knowing the Arrival Order in\\ Bayesian Online Settings\thanks{This work is supported by Science and Technology Innovation 2030 –“New Generation of Artificial Intelligence” Major Project No.(2018AAA0100903), Innovation Program of Shanghai Municipal Education Commission, Program for Innovative Research Team of Shanghai University of Finance and Economics (IRTSHUFE) and the Fundamental Research Funds for the Central Universities. This project has received funding from the European Research Council (ERC) under the European Union's Horizon 2020 research and innovation program (grant agreement No. 866132), by the Israel Science Foundation (grant number 317/17), by an Amazon Research Award, and by the NSF-BSF (grant number 2020788). Tomer Ezra  was partially supported by the ERC Advanced Grant 788893 AMDROMA ``Algorithmic and Mechanism Design Research in Online Markets'' and MIUR PRIN project ALGADIMAR ``Algorithms, Games, and Digital Markets''. Zhihao Gavin Tang is supported by NSFC grant 61902233. Nick Gravin is supported by NSFC grant 62150610500.
} 
}


\author{
Tomer Ezra\thanks{Sapienza University of Rome, Email:  \texttt{ezra@diag.uniroma1.it}}
\and
Michal Feldman\thanks{Tel Aviv University, Email:  \texttt{mfeldman@tau.ac.il}}
\and
Nick Gravin\thanks{ITCS, Shanghai University of Finance and Economics, Email: \texttt{\{nikolai, tang.zhihao\}@mail.shufe.edu.cn}}
\and
Zhihao Gavin Tang\footnotemark[3]
}
\date{}

\maketitle
\thispagestyle{empty}

\begin{abstract}
	We introduce a new measure for the performance of online algorithms in Bayesian settings, where the input is drawn from a known prior, but the realizations are revealed one-by-one in an online fashion.
Our new measure is called {\em order-competitive ratio}. 
It is defined  as the worst case (over all distribution sequences) ratio between the performance of the best {\em order-unaware} and {\em order-aware} algorithms, and quantifies the loss that is incurred due to lack of knowledge of the arrival order. 
Despite the growing interest in the role of the arrival order on the performance of online algorithms, this loss has been overlooked thus far.

We study the order-competitive ratio in the paradigmatic {\em prophet inequality} problem, for the two common objective functions of (i) maximizing the expected value, and (ii) maximizing the probability of obtaining the largest value; and with respect to two families of algorithms, namely (i) adaptive algorithms, and (ii) single-threshold algorithms.
We provide tight bounds for all four combinations,  with respect to deterministic algorithms.
Our analysis requires new ideas and departs from standard techniques. 
In particular, our adaptive algorithms inevitably go beyond single-threshold algorithms.
The results with respect to the order-competitive ratio measure capture the intuition that adaptive algorithms are stronger than single-threshold ones, and may lead to a better algorithmic advice than the classical competitive ratio measure.



\end{abstract}

\clearpage

\setcounter{page}{1}


\section{Introduction}
\label{sec:intro}

As part of the growing literature on ``beyond worst case analysis" (see \citep{roughgarden2021beyond} for a textbook treatment), the area of online algorithms has seen a notable shift from adversarial analysis to average-case analysis. 
A prominent framework is {\em Bayesian online settings}, where the input is assumed to be drawn from a known prior, but the actual realizations are revealed in an online fashion.
Many online problems have been studied within the Bayesian online framework, such as prophet inequality, stochastic matching, Steiner tree, and welfare and revenue maximization in combinatorial auctions \cite{krengel1978semiamarts, krengel1977semiamarts,samuel1984comparison,GravinW19,EzraFGT20,DBLP:conf/soda/GargGLS08,FeldmanGL15}.

The most common performance measure for the analysis of online {algorithms} is the {\em competitive ratio}, defined as the ratio between the performance of the algorithm and the optimal offline solution. That is, the performance of an online algorithm is evaluated by its performance relative to the performance of a ``prophet” who can see into the future.
The competitive ratio measure has been applied directly to the Bayesian online setting, where the performance of both the algorithm and the optimal solution is taken in expectation over the input.

However, the prophet benchmark may be too strong for Bayesian online settings.
In particular, it may provide an overly pessimistic performance prediction, and perhaps more concerning, it may fail to differentiate between different classes of algorithms, and may consequently provide  misleading algorithmic advice. 
A natural way to address this problem is to propose a more realistic benchmark. 
Indeed, there has been much interest recently in considering alternative, more realistic, benchmarks for this setting (see Section~\ref{sec:related}). 

For concreteness, let us consider the well known {\em prophet inequality} problem --- 
perhaps the most paradigmatic problem within Bayesian online settings. 

\paragraph{Prophet Inequality  \cite{krengel1977semiamarts,krengel1978semiamarts,samuel1984comparison}.} 
In a prophet inequality setting, there are $n$ boxes, each box $t=1,\ldots,n$ contains a value $v_t$ drawn from a known probability distribution $F_t$. 
The boxes arrive online. Upon the arrival of box $t$, its value $v_t$ is revealed, and the online algorithm needs to decide immediately and irrevocably whether to accept it, in which case the game ends with a reward of $v_t$, or to skip it, in which case the reward $v_t$ is lost forever and the game proceeds to the next box.
The prophet inequality problem captures many real-life scenarios where a decision maker inevitably makes decisions based on partial information regarding the future.
It has become central in the study of market design due to its close connection to mechanism design and posted price mechanisms~\cite{hajiaghayi2007automated}.

In its standard and most well-studied variant, the objective is to maximize the expected accepted value (corresponding to social welfare maximization in the market scenario above). 
Hence, 
the competitive ratio is the worst-case ratio (over all possible distribution sequences) between the expected value accepted by the algorithm, and the expected maximum value. 


The problem has been first studied under the assumption that the arrival order is known by the algorithm, where the optimal algorithm uses a sequence of $n$ different thresholds, computed by a backward induction process.
A celebrated result states that this {\em order-aware} algorithm achieves an expected value at least $1/2$ of the prophet \cite{krengel1977semiamarts,krengel1978semiamarts}, and this result is tight with respect to the prophet benchmark; that is, no online algorithm can obtain a better competitive ratio.

Quite remarkably, it has been later shown that the same competitive ratio of $1/2$ can be obtained by a single-threshold algorithm; namely, an algorithm that determines a single fixed threshold, and accepts the first value that exceeds it. Moreover, this algorithm is {\em order-unaware}, namely, in order to apply it, no information about the arrival order is needed~\cite{samuel1984comparison,KleinbergW19}.
Order-unawareness is a desirable property of online algorithms, as in many cases, no information about the arrival order is provided, and even if provided, order-unaware algorithms are robust against unexpected changes in the arrival order that may occur.
 


A natural question arises: what is the ratio between the performance of {\em order-unaware} and {\em order-aware} algorithms?
This essentially suggests a natural and more realistic benchmark for Bayesian online problems --- the best order-aware online algorithm. 
Such an algorithm still makes decisions online, but has an informational advantage over order-unaware algorithms, as it knows the arrival order in advance and can use this information to possibly make better decisions.
This leads us to the definition of a new performance measure for Bayesian online settings --- the {\em order-competitive ratio}.

\paragraph{Order-competitive ratio.}
We introduce the {\em order-competitive ratio}, defined as the worst-case ratio (over all distribution sequences) between 
the performance of the best {\em order-unaware} algorithm and the best {\em order-aware} algorithm)\footnote{Note that this new measure uses the same benchmark as the one in \cite{DBLP:conf/sigecom/PapadimitriouPS21,DBLP:conf/wine/NiazadehSS18}, but unlike \cite{DBLP:conf/sigecom/PapadimitriouPS21}, our algorithms are order-unaware, thus enjoy the robustness of order-unaware algorithms.}.
Thus, the order-competitive ratio {\bf quantifies the loss that is incurred by Bayesian online algorithms due to unknown arrival order}.

Recent years have seen a growing interest in the effect of the arrival order on the performance of Bayesian online algorithms.
Indeed, various arrival models have been studied, ranging from adversarial order~\cite{krengel1977semiamarts,krengel1978semiamarts,samuel1984comparison}, to random arrival order (a variant known as prophet secretary, as it combines the Bayesian assumption of prophet with the random order of secretary)~\cite{esfandiari2017prophet,azar2018prophet,ehsani2018prophet,CorreaSZ21}, all the way to ``free order", where the algorithm may dictate the arrival order~\cite{beyhaghi2018improved,AgrawalSZ20,PengT22}.
It is quite surprising that despite this surge of interest, the loss incurred due to lack of information about the arrival order (or put differently, the power obtained by knowledge of the arrival order) has not been studied.

To see that knowledge of the arrival order may be useful, consider the following example.

\begin{example} (see Figure~\ref{fig:example})
Suppose there are three boxes. Two boxes have deterministic values $\sqrt{2}$ and $1$, respectively, and one box has value $1/\epsilon$ with probability $\epsilon$ (and $0$ otherwise). Suppose further that the objective is to maximize the expected accepted value.   Suppose the first observed value is $\sqrt{2}$, and an order-unaware algorithm $\alg$ needs to decide immediately whether to accept it. If $\alg$ accepts it, the next arriving value is the random one, followed by $1$ (Figure~\ref{fig:example} (a)); else, the order flips, and $1$ arrives next, followed by the random box (Figure~\ref{fig:example} (b)). 
Consider now the best online algorithm that {\em knows} the order. In the former case (where $\alg$ accepted $\sqrt{2}$), it rejects $\sqrt{2}$, and  accepts the second value iff it is positive, gaining an expected value of $\sim 2$. The order-competitive ratio is then $1/\sqrt{2}$.
In the latter case (where $\alg$ rejected $\sqrt{2}$), it accepts $\sqrt{2}$, while $\alg$ gets an expected value of $1$, leading again to an order-competitive ratio of $1/\sqrt{2}$. 
\label{ex:gap}
\end{example}

 \begin{figure}
	\centering
	\includegraphics[width=0.9\textwidth]
	{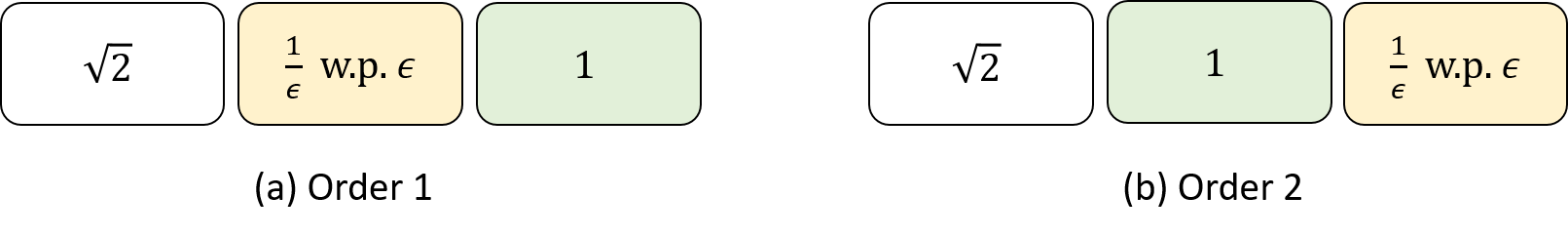}
	\caption{An example showing an upper bound of $1/\sqrt{2}$ on the order-competitive ratio of deterministic algorithms.
		\label{fig:example}}
\end{figure}

Example~\ref{ex:gap} shows that the order-competitive ratio cannot be better than $1/\sqrt{2} \approx 0.707$ (for deterministic algorithms). 
Clearly, it is at least $1/2$, as the order-competitive ratio is always weakly greater than the classic competitive ratio. 
It follows that the order-competitive ratio lies somewhere in the interval $[\frac{1}{2}, \frac{1}{\sqrt{2}}]$. 
As we show in this paper, the tight answer is the inverse of the golden ratio. 

\paragraph{Objective functions.}
The prophet inequality problem has been studied with respect to two well-motivated objective functions, namely (i) maximizing the expected value (hereafter, max-expectation objective); and (ii) maximizing the probability to catch the maximum value (hereafter, max-probability objective).

The max-expectation objective has been the subject of our discussion thus far; the max-probability objective has been first studied for the special case of i.i.d. random variables \citep{Gilbert1966}, and more recently in its general form \citep{esfandiari}, both with respect to the classical competitive ratio measure. 
This objective resembles the famous secretary problem \citep{ferguson1989solved} (but with values drawn from known distributions, and under an adversarial order, where the secretary problem considers adversarial values arriving in a random order). 
The prophet, who can see into the boxes, can trivially accept the maximum value with probability $1$, and so the competitive ratio of an algorithm is precisely its probability of catching the maximum value. 
Esfandiari et al.~\cite{esfandiari} devised an order-unaware single-threshold algorithm that gives a tight (up to lower-order terms) competitive ratio of $1/e$. 

\paragraph{Adaptive vs. single-threshold algorithms.}
The discussion above suggests that, when evaluated according to the competitive ratio measure, for both objective functions, single-threshold algorithms are optimal (providing tight $1/2$ and $1/e$ competitive ratios, respectively). 
Thus, the competitive ratio measure fails to differentiate between single-threshold and adaptive algorithms for Bayesian online settings. 
And yet, intuition suggests that adjusting one's decisions to the observed input may be quite useful.
This intuition is nicely captured by the order-competitive ratio.
Indeed, as we show in Section~\ref{sec:results}, the order-competitive ratio of adaptive algorithms is strictly better than the order-competitive ratio of single-threshold algorithms.
In this respect, the classical competitive ratio may provide  misleading algorithmic advice. 

\paragraph{Single-threshold algorithms.}
Single-threshold algorithms are particularly appealing, due to their simplicity and robustness.  
They also correspond to a particularly appealing policy in a natural market scenario captured by the prophet inequality problem, described as follows:
Consider a single-item sale, in a market with $n$ buyers, each with value $v_t$ drawn from a known distribution $F_t$.
The agents arrive to the market in an online fashion, revealing their values upon arrival. 
The seller needs to determine a selling policy. 
In this scenario, a single-threshold algorithm corresponds to a fixed price that is posted from the outset, where the first buyer whose value exceeds the price buys the item.

Interestingly, the single-threshold algorithms that obtain the $1/2$ and $1/e$ competitive ratios for the two objectives, respectively, are order-unaware.
However, single-threshold algorithms may be order-aware. 
In this paper we study the order-competitive ratio with respect to this class of algorithms as well. 
That is, we compare the performance of the best {\em order-unaware} single-threshold algorithm to that of the best {\em order-aware} single-threshold algorithm.


\subsection{Our Results}  \label{sec:results}

Our results are summarized in Table~\ref{tab:results}.
We study the order-competitive ratio with respect to two objective functions, namely, (i) maximizing the expected value (left column), and (ii) maximizing the probability to obtain the maximum value (right column); and with respect to two families of algorithms, namely, (i) adaptive algorithms (top row), and (ii) single-threshold algorithms (bottom row). 
We provide tight bounds for all four combinations, with respect to deterministic algorithms.

\begin{table}
\centering
\begin{tabular}{|cc|cc|}
\hline
\multicolumn{2}{|c|}{}                                     & \multicolumn{2}{c|}{\textbf{Objective function}}                                                \\ \cline{3-4} 
\multicolumn{2}{|c|}{}                                                                                      & \multicolumn{1}{c|}{\textbf{max-expectation}}      & \textbf{max-probability} \\ \hline
\multicolumn{1}{|c|}{{\textbf{Algorithm type}}} & \textbf{Adaptive}               & \multicolumn{1}{c|}{$\frac{1}{\phi}   \sim 0.618$} & $\sim   0.806$                              \\ \cline{2-4} 
\multicolumn{1}{|c|}{}                                               & \textbf{Single-threshold} & \multicolumn{1}{c|}{$\frac{1}{\phi}   \sim 0.618$} & $\sim   0.646$                              \\ \hline
\end{tabular}
\caption{Our results:
tight (with respect to deterministic algorithms) order-competitive ratios. 
Note that the competitive ratios in the two rows are defined with respect to different benchmarks (best adaptive and best single-threshold, respectively).}
\label{tab:results}
\end{table}

For example, the top left cell in the table shows that the order-competitive ratio with respect to the max-expectation objective function and adaptive deterministic algorithms is $1/\phi \approx 0.618$ (where $\phi$ is the golden ratio).
Namely, there exists a deterministic online {\em order-unaware} algorithm that obtains an expected value of at least $1/\phi \approx 0.618$ of the best online {\em order-aware} algorithm. Moreover, this is tight (with respect to deterministic algorithms).

This result is a significant improvement over the competitive ratio of $1/2$, as measured against the standard prophet benchmark.
As observed by \cite{DBLP:conf/wine/NiazadehSS18}, this result cannot be obtained by a single-threshold algorithm. 
Indeed, our algorithm is more complex and is inevitably adaptive. 
The other results in the table should be interpreted analogously, with respect to the corresponding objective functions (max-expectation or max-probability) and families of algorithms (adaptive or single-threshold).

Surprisingly, the same tight bound of $1/\phi \approx 0.618$ applies also for the family of single-threshold algorithms (bottom left cell), comparing the maximum expected value of the best deterministic online order-unaware single-threshold algorithm to that of the best online order-aware single-threshold algorithm. To the best of our understanding, the two results are unrelated. Indeed, despite obtaining the exact same ratio, they are derived using different analysis and techniques.

For the max-probability objective, we give a tight bound of $\sim 0.806$ for adaptive algorithms (top right cell), and a tight bound of $\sim 0.646$ for the family of single-threshold algorithms (bottom right cell).
The former result is a huge improvement over the competitive ratio of $1/e$ \citep{esfandiari}, with respect to the standard prophet benchmark (which is trivially 1, as the prophet can always select the box with the maximum value).
Furthermore, the upper bound of $0.646$ on the order-competitive ratio for the single-threshold algorithms implies that the best adaptive algorithm with order-competitive ratio $0.806$ cannot be single-threshold as opposed to the
optimal algorithm of~\citep{esfandiari} for the prophet benchmark. 


Recall that for the standard competitive ratio measure, the max-expectation objective yields better results than the max-probability objective. 
Indeed, the competitive ratio with respect to max-expectation is tightly $1/2$, while it is tightly $1/e$ with respect to max-probability. 
Interestingly, for the order-competitive ratio the order flips; namely, the max-probability objective exhibits better order-competitive ratios than the max-expectation objective. 

\vspace{0.1in}
\paragraph{Our Techniques.}
The algorithms for all of our settings use novel policies that depart from standard analysis of prophet inequalities. 
The derivation of these algorithms has been obtained by simultaneously considering policies that provide good bounds and analyzing their corresponding worst-case instances, and can be viewed as an application of the primal-dual approach. 
For simplicity of presentation, however, in our positive results we avoid explicit references to the worst-case instances, and focus instead on directly obtaining the essential inequalities. 
All worst-case instances turn out to consist of sequences of Bernoulli random variables with values $\{0,  a_i\}$, with vanishingly small probabilities of $a_i$ (in some cases we could simplify those instances to have only a constant number of boxes).
On the other hand, each of the four settings we consider requires a unique set of ideas, as we briefly discuss below.

First, our unrestricted algorithms for both max-expectation and max-probability objectives utilize \emph{adaptive thresholds}. Adaptive thresholds have been also used for multi-choice prophet inequality \citep{kleinberg2012matroid}. However, the analysis is fundamentally different since in our case the value of the benchmark is unknown, and may change depending on the arrival order. Part of the challenge is to gradually learn the benchmark and adjust the algorithm accordingly. This is in contrast to the prophet benchmark which is known from the outset.

Our approach for the max-expectation objective includes two novel ideas. First, we combine two different static threshold policies. Namely, at each step of the algorithm we take the \emph{higher of two} thresholds, which are variants of known thresholds for the classic prophet inequality.  
The first threshold is $\tau_1=\frac{1}{\phi}\Ex[\vals]{\max_i v_i}$; the second threshold is $\tau_2$ such that  $\Ex[\vals]{\left(\max_i v_i- \phi\tau_2\right)^+}=\tau_2$.
While the latter threshold is less known, a variant of it has already appeared in \citep{samuel1984comparison}.
The first threshold ($\tau_1$) is an estimation of the expected reward under the best order (scaled by $1/\phi$). In some cases, this threshold is too low. 
To this end, we consider also a second threshold ($\tau_2$), which is a scaled  estimation of the expected reward under the worst order.
Second, since the benchmark is unknown, we use adaptive variants of the corresponding thresholds. Namely, at each stage we define the thresholds with respect to the remaining boxes only.
Our algorithm essentially balances the two thresholds, in an adaptive way, to reflect the information we've gained about the (unknown) benchmark.
We are unaware of prior work on prophet inequalities 
that uses the better of two strategies at every individual step of the algorithm. 

For the max-probability objective, our analysis proceeds by
comparing the behavior of order-aware and order-unaware algorithms box-by-box, similar to the case of max-expectation. 
On the other hand, in the worst-case instance there is a deterministic box that comes first and the algorithm needs to commit whether to accept it or not.
If the order-unaware algorithm decides to select it, then the other boxes arrive in the optimal order (decreasing values). Otherwise, the boxes arrive in the worst order (increasing values). The hard instance is constructed by balancing these two cases.



In the case of single-threshold algorithms, each one of the order-aware and order-unaware algorithms can be described by a single real number (the corresponding threshold), leading to an analysis that involves 2 parameters.  By fixing these two thresholds, we obtain a relatively simple class of worst-case Bernoulli instances that minimizes the performance of order-unaware algorithms while maximizing the performance of order-aware ones.
(We do so separately for each one of our objective functions).
This process essentially induces a max-min optimization problem on the threshold values.
The most technically-involved part of our analysis is the solution of the induced max-min optimization problems.  

\subsection{Related Work}
\label{sec:related}

\paragraph{Different arrival models.}

Besides the adversarial order, random arrival order, and free order settings that we discussed above, another recent study related to the arrival order in prophet inequality settings has shown that for any arrival order $\pi$, the better of $\pi$ and the reverse order of $\pi$ achieves a competitive ratio of at least the inverse of the golden ratio, namely $1/\phi \approx 0.618$  \citep{arsenis2021constrained}. To the best of our understanding, while our first main result obtains the exact same ratio, the two results are unrelated.

\paragraph{Alternative benchmarks.}
Considering alternative benchmarks to the ``prophet'' (i.e., optimal offline) benchmark has attracted a lot of interest recently \citep{kessel2021stationary,DBLP:conf/wine/NiazadehSS18,DBLP:conf/sigecom/PapadimitriouPS21}.

For example, \cite{DBLP:conf/wine/NiazadehSS18} quantify the loss due to single-threshold algorithms by the worst-case ratio between the best single-threshold algorithm and the best adaptive online algorithm (single-threshold or not), both under a known order, and show that the $1/2$ ratio is tight.

As another example, \citet{DBLP:conf/sigecom/PapadimitriouPS21} consider the problem of online matching in bipartite graphs. This problem is known to admit a $1/2$-competitive algorithm with respect to the prophet benchmark \citep{FeldmanGL15}. 
But \citet{DBLP:conf/sigecom/PapadimitriouPS21} propose to study 
the ratio between the optimal polynomial (order-aware) algorithm and the optimal computationally-unconstrained  (order-aware) algorithm, and show that this ratio exceeds $1/2$.
(Note that this question makes sense in the matching variant, where the optimal online algorithm for matching is computationally hard even for known order.)
The ratio is then improved to $0.526$ by \citet{SaberiW21} and to $1-1/e$ by \citet{https://doi.org/10.48550/arxiv.2206.01270}.


\paragraph{Beyond single-choice settings.}
A related line of work, initiated by \citet{kennedy1985optimal,kennedy1987prophet,kertz1986comparison},
extends the single choice optimal stopping problem to  multiple-choice settings. More recent work extended it to additional combinatorial settings, including matroids~\citep{KleinbergW19,azar2014prophet}, 
polymatroids~\citep{dutting2015polymatroid}), matching~\citep{GravinW19,EzraFGT20}, combinatorial auctions \citep{FeldmanGL15,DuttingFKL20,dutting2020log}, and general downward closed feasibility constrains~\citep{rubinstein2016beyond}.  

\paragraph{Limited information models.}
Prophet inequality problems have been also studied under limited information about the underlying distributions, where the emphasis is on the sample complexity of the problem ~\citep{azar2014prophet,correa2019prophet,correa2020two,ezra2018prophets,rubinstein2019optimal,kaplan2020competitive,correa2021secretary,dutting2021secretaries,caramanis2022single,kaplan2022online}.

\section{Model and Preliminaries}
\label{sec:prelim}


Consider a setting with $n$ boxes. 
Every box $t$ contains some value $v_t$ drawn from an underlying independent distribution $F_t$. 
The underlying distributions are known from the outset, but the values are revealed sequentially in an online fashion.
For convenience of notations, we assume that the boxes arrive in an order $1,2,\ldots,n$. 
I.e., at stage $t$, we observe the realized value $v_t = \theta_t$, where $v_t \sim F_t$, {and the identity of the arriving box.} 
It will be clear from the context whether the order of arrival is assumed to be known. 
In any case, the identity of the arriving box is known.
We denote by $\dists=\prod_t F_t$ the (product) distribution of the value profile
$\vals=(v_1,\ldots,v_n)$.
Upon the arrival of value $v_t = \theta_t$, the algorithm needs to decide, immediately and irrevocably, whether to accept the box.

We consider two different objectives:
(i) maximizing the expected value of the accepted box, and (ii) maximizing the probability of catching the box with the largest value.


An online algorithm is said to be {\em order-aware} if it knows the arrival order of the boxes from the outset, and is said to be {\em order-unaware} if it doesn't. 

Our goal is to measure the performance of order-unaware algorithms against that of the best online order-aware algorithm.
Our order-unaware algorithm will be denoted by $\alg$, and the best order-aware algorithm by $\opt$. 

Given an arrival order $\pi$ and a value profile $\vals$, we denote by $\alg(\vals,\pi)$ the value accepted by $\alg$ under values $\vals$ and arrival order $\pi$, and by $\opt(\vals,\pi)$ the value accepted by $\opt$ under $\vals,\pi$.


We denote by $\alg(\pi)$ the performance of $\alg$. 
For the first objective, it is the expected accepted value, i.e., 
$
\alg(\pi)=\Ex[\vals\sim\dists]{\alg(\vals,\pi)}
$.
For the second objective, it is the probability of catching the maximum value, i.e., 
$
\alg(\pi)=\prob[\vals\sim\dists]{\alg(\vals,\pi)=\max_i v_i}.
$
When studying the objective of catching the maximum value, {we assume that} the maximum value is unique with probability $1$ (as in the case where, e.g., the supports of the distributions are disjoint, or where the distributions are atomless). 

The order-competitive ratio of an order-unaware algorithm $\alg$ measures the loss in performance due to unknown order. 
It is defined as the worst-case ratio of the performance of $\alg$ and the performance of $\opt$, over all arrival orders. 

\begin{definition}
The order-competitive ratio of an order-unaware algorithm $\alg$ is 
$$
\Gamma(\alg) = \min_{\pi}\frac{\alg(\pi)}{\opt(\pi)}
$$
\end{definition}

\subsection{Single-Threshold Algorithms}
\label{sec:prelim-st}
We also consider an important restricted family of algorithms, namely single-threshold algorithms.  
A single-threshold algorithm $\alg_{\tau}$, parameterized by a single number\footnote{More generally, the threshold $\tau$ can be chosen at random. But we only consider deterministic single-threshold algorithms.} $\tau\ge 0$, stops at the first box $i$ with $v_i\ge\tau$.
We denote by $\alg_{\tau}(\pi)$ the algorithm's expected value for the arrival order $\pi$. I.e., 
$
\alg_{\tau}(\pi)=\Ex[\vals\sim\dists]{\alg_{\tau}(\vals,\pi)}
$.

When studying the objective of catching the maximum value, we assume that the distributions are atomless, i.e., every cumulative distribution function is continuous. As standard 
(see, e.g., \cite{esfandiari}), we extend the definition of single-threshold algorithms for discrete distributions by allowing the algorithm to randomize when the realized value $v_i$ equals to the threshold $\tau$.

We note that the threshold that gives the best performance may depend on the arrival order $\pi$ and consequently there is a gap between order-unaware and order-aware single-threshold algorithms.
We define the order-competitive ratio for a single-threshold algorithm $\alg_{\tau}$, with respect to the class of single-threshold algorithms, as the worst-case (over all arrival orders $\pi$) ratio between $\alg_{\tau}(\pi)$ and $\alg_{\tau'}(\pi)$, for any single threshold $\tau'$ (where the threshold $\tau'$ may depend on $\pi$, but $\tau$ is unaware of $\pi$).

\begin{definition}
The order-competitive ratio of a single-threshold algorithm $\alg_{\tau}$, with respect to the class of single threshold algorithms, is 
$$
\Gamma_{ST}(\alg_{\tau}) = \inf_{\pi,\tau'}\frac{\alg_{\tau}(\pi)}{\alg_{\tau'}(\pi)}.
$$
\end{definition}

Recall that if one compares the performance of a single-threshold algorithm to the best online algorithm, then a competitive ratio of $1/2$ is tight, even if the arrival order is known \citep{DBLP:conf/wine/NiazadehSS18}.

\section{Maximizing the Expected Value}
\label{sec:expectation}


In this section we study the objective function of maximizing the expected value. Section~\ref{sec:max-expected-general} gives the order-competitive ratio with respect to adaptive algorithms, and Section~\ref{sec:expectation-st} gives the order-competitive ratio with respect to single-threshold algorithms.

\subsection{Adaptive Algorithms}
\label{sec:max-expected-general}
Our main result in this section is a deterministic order-unaware algorithm that obtains a tight order-competitive ratio of the inverse of the golden ratio (i.e., $\frac{1}{\phi} \approx 0.618$) with respect to the objective of maximizing the expected value.
	
\paragraph{An order-unaware algorithm.}
For the convenience of notations, we assume the boxes arrive in a specific order from $1,2,\ldots,n$. I.e., at stage $t$, we observe $v_t = \theta_t$ where $v_t \sim F_t$. It will be clear from the description of our algorithm that it is order-unaware.
We define the following series of random variables
\[
\text{Prophet in the future:}\quad y_t \eqdef \max_{s > t} v_s.
\]
At each step $t\in[n]$, our algorithm will use the larger of the following two thresholds
\begin{equation}
\label{eq:beta_def}
\alpha_t \eqdef \frac{1}{\phi} \cdot \Ex{y_t}
\quad\quad
\beta_t\eqdef x \text{ satisfying }  \Ex{\left(y_t - \phi \cdot x\right)^+} = x,
\end{equation}
where $\phi = \frac{\sqrt{5}+1}{2}$ is the golden ratio. 
When $t=n$, $\alpha_t=\beta_t=0$. Note that the equation defining $\beta_t$ has a unique solution, since its LHS is a strictly decreasing continuous function in $x$ that starts from a non-negative number when $x=0$ and goes to $0$ for $x=\infty$, and the RHS is a strictly increasing continuous function in $x$ that starts from $0$ when $x=0$ and goes to $\infty$ when $x=\infty$.
Our algorithm $\alg$ stops at box $t$ if and only if the realized value $\theta_t$ of $v_t$ exceeds the threshold $\tau_t$, defined as follows 
\begin{equation}
\label{eq:tau-def}
\tau_t\eqdef\max (\alpha_{t}, \beta_{t}).
\end{equation}

Note that $\alg$ is order-unaware, since it does not need to know the arrival order of the remaining boxes in order to calculate $\alpha_t,\beta_t$. 
Denote the expected value obtained by this algorithm as $\alg$ and the one obtained by the best order-aware algorithm as $\opt$. 

Our main theorem in this section is the following.

\begin{theorem}
	\label{th:pouo-det}
	  For every arrival order $\pi$, $\alg(\pi) \ge \frac{1}{\phi}\opt(\pi)$, where $\phi$ is the golden ratio.
\end{theorem}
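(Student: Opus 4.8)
The plan is to analyze a single step of $\alg$ and charge the loss relative to $\opt$ to the two thresholds $\alpha_t$ and $\beta_t$, exploiting the defining equations in~\eqref{eq:beta_def}. Fix an arrival order $\pi$ (WLOG $1,2,\dots,n$). Let $\tau_t = \max(\alpha_t,\beta_t)$ be the threshold $\alg$ uses at box $t$, and let $\tau = \tau_1$ be the \emph{first} threshold, i.e.\ the one $\alg$ applies to box $1$. The key structural observation is that $\opt$, being an online algorithm, can never guarantee more than roughly ``$\tau$ now, or continue to the remaining boxes'' --- more precisely I would like to prove an upper bound of the form $\opt(\pi) \le \tau + \Ex{(y_1 - \phi\tau)^+}$, or something close to it, using the fact that at box $1$ either $\opt$ stops (getting at most $v_1$, which under the bad event is roughly $\tau$) or $\opt$ continues (getting at most $\Ex{y_1}$, with a correction). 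Meanwhile $\alg$ gets at least $\tau$ whenever some future value clears the bar, i.e.\ with the complementary probability it gets something in the tail. The two thresholds are designed so that one of them dominates in each regime: $\alpha_t = \tfrac1\phi \Ex{y_t}$ controls the ``$\opt$ continues'' branch (giving $\Ex{y_1} = \phi\alpha_1 \le \phi\tau$), while $\beta_t$, defined by $\Ex{(y_t-\phi\beta_t)^+} = \beta_t$, controls the tail term $\Ex{(y_1-\phi\tau)^+} \le \Ex{(y_1-\phi\beta_1)^+} = \beta_1 \le \tau$. Combining, $\opt(\pi) \le \tau + \tau = 2\tau$? That is too weak; the golden ratio must come from a sharper accounting, so the real argument tracks the probability $p$ that box $1$'s value is below $\tau$ and interpolates.

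Concretely, I would set $p \eqdef \Prx{v_1 < \tau_1}$ and split into cases. If $\alg$ accepts box $1$ (probability $1-p$), it earns $\Ex{v_1 \mid v_1 \ge \tau_1} \ge \tau_1$, and I need to show $\opt$'s conditional performance is not too much larger --- here I'd use that $\opt$ on the sub-instance starting at box $2$ is itself bounded via induction / via $\Ex{y_1}$. If $\alg$ rejects box $1$ (probability $p$), then inductively $\alg$ on boxes $2,\dots,n$ gets at least $\tfrac1\phi$ of $\opt$ on that sub-instance; the subtlety is that $\opt(\pi)$ is \emph{not} simply $p\cdot\opt(\text{rest}) + (1-p)\cdot(\text{something})$ because $\opt$ is order-aware and may behave differently, but since both algorithms see the same prefix, $\opt(\pi) \le \max\bigl(\Ex{v_1 \mid v_1\ge\tau_1}\cdot(\text{accept branch value}),\ \dots\bigr)$ --- I would make this rigorous by writing $\opt(\pi) \le \Ex{\max(v_1\cdot\idr{v_1\ge c},\ \opt_{\text{rest}})}$ over $\opt$'s own (order-aware, adaptive) choice of cutoff $c$, then bounding that expectation using the $\alpha,\beta$ equations. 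The golden ratio emerges because $\phi$ is exactly the constant for which $\tfrac1\phi\bigl(1 + \tfrac1\phi\bigr) = 1$, i.e.\ $\phi^2 = \phi+1$: the factor $\tfrac1\phi$ from $\alpha_t$ on the ``continue'' term composes with the inductive $\tfrac1\phi$ on the sub-instance and must still yield $\tfrac1\phi$ overall.

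The main obstacle I anticipate is pinning down the correct upper bound on $\opt(\pi)$ at a single step --- specifically, handling the fact that $\opt$ is order-aware and adaptive, so its behavior on the suffix can depend on information $\alg$ doesn't have, and one cannot naively decompose $\opt(\pi)$ as a convex combination over $\alg$'s stopping event. I would address this by comparing both algorithms against the \emph{same} online relaxation: $\opt(\pi) \le \sup_{c\ge 0} \Ex{v_1\cdot\idr{v_1 \ge c} + \idr{v_1 < c}\cdot \opt_{2..n}(\pi)}$, and then the real work is to show that for \emph{every} cutoff $c$ this quantity is at most $\phi$ times what $\alg$ earns (which uses $\alg$'s actual threshold $\tau_1 = \max(\alpha_1,\beta_1)$). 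When $c \le \tau_1$ the bound follows from the $\beta_1$-equation plus induction; when $c > \tau_1$ it follows from the $\alpha_1$-equation ($\Ex{y_1} = \phi\alpha_1 \le \phi\tau_1$) plus induction. The second obstacle is the base case and the monotonicity facts $\Ex{(y_t - \phi x)^+}$ decreasing in $x$ and the resulting $\beta_1 \le \tau_1 \Rightarrow \Ex{(y_1 - \phi\tau_1)^+} \le \beta_1$, which are the routine inequalities the excerpt already essentially verifies when asserting $\beta_t$ is well-defined; those I would state as a short lemma and not belabor.
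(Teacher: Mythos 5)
Your overall skeleton (induction on $n$, conditioning on what happens at box $1$, with the $\alpha_1$-threshold handling the branch where $\alg$ stops and $\opt$ continues) matches the paper's proof, but there is a genuine gap in the one case that carries all the technical weight: the branch where $\opt$ accepts a realized value $\theta_1$ with $c \le \theta_1 < \tau_1$ while $\alg$ rejects it. There you need a lower bound on $\alg$'s own continuation value $\alg_2$ (the expected value $\alg$ collects from boxes $2,\dots,n$ with its thresholds $\tau_2,\dots,\tau_n$) of the form $\alg_2 \ge \frac{1}{\phi}\tau_1 = \frac{1}{\phi}\max(\alpha_1,\beta_1)$. Your claim that this ``follows from the $\beta_1$-equation plus induction'' does not go through: the induction hypothesis only relates $\alg_2$ to $\opt_2$, giving $\alg_2 \ge \frac{1}{\phi}\opt_2$, and $\opt_2$ can be as small as roughly $\frac{1}{2}\Ex{y_1}$, so this yields only about $\frac{1}{2\phi}\Ex{y_1} < \frac{1}{\phi^2}\Ex{y_1} = \frac{1}{\phi}\alpha_1$. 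Even if you first argue $\opt_2 \ge \beta_1$ (true, via a single-threshold argument at $\beta_1$, since $\Ex{(y_1-\beta_1)^+} \ge \Ex{(y_1-\phi\beta_1)^+} = \beta_1$), you still only get $\alg_2 \ge \frac{1}{\phi}\beta_1 \ge \frac{1}{\phi^2}\tau_1$, a full factor $\phi$ short of what the case requires. So the order-competitive induction hypothesis is structurally incapable of certifying this case.

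The missing ingredient is precisely the paper's key lemma (Lemma~\ref{lem:key-det}): $\alg_{t+1} \ge \beta_t \ge \frac{1}{\phi}\alpha_t$, a lower bound on the algorithm's continuation value stated directly in terms of the adaptive thresholds, proved by a \emph{separate} induction over the suffix that has nothing to do with $\opt$. Its proof is not the ``routine monotonicity'' you defer: it requires showing $f(\alg_t)\le 0$ for $f(x)=\Ex{(y_{t-1}-\phi x)^+}-x$ via the chain in Equations~\eqref{eq:algt_beta}--\eqref{eq:algt_longderivation}, using $\alg_t\ge\beta_t$, the pointwise inequality $(\max(a,b)-c)^+\le(\max(a,d)-c)^+ + (b-d)^+$, the defining identity $\Ex{(y_t-\phi\beta_t)^+}=\beta_t$, and the fact $\beta_t\ge\frac{1}{\phi}\alpha_t$ to kill the term $(\tau_t-\phi\beta_t)^+$. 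With that lemma in hand, the crucial case closes immediately ($\alg_2 \ge \beta_1 \ge \frac{1}{\phi}\tau_1 > \frac{1}{\phi}\theta_1$), and the rest of your case analysis (both accept; both reject; $\alg$ accepts via $\alpha_1$ and the prophet bound on $\opt_2$) is exactly the paper's argument. Your first-paragraph sketch bounding $\opt \le \tau + \Ex{(y_1-\phi\tau)^+} \le 2\tau$ is, as you note yourself, too weak, and the proposed ``interpolation over $p$'' is never carried out; it cannot substitute for the continuation-value lemma.
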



\begin{proof} 
Fix an order $\pi$.
Let $\alg_t$ denote the expected value of $\alg$ when run only on the boxes from $t$ to $n$. We first establish the following useful bound on the performance of $\alg_t$ relative to the thresholds $\alpha_t,\beta_t$.
\begin{lemma}
	\label{lem:key-det}
It holds that $\alg_{t+1} \ge \beta_t \ge \frac{1}{\phi^2} \Ex{y_t} = \frac{1}{\phi} \alpha_t$ for any $t\in[n-1]$.
\end{lemma}
\begin{proof}
According to the definition of $\beta_t$, $\beta_t = \Ex{\left(y_t - \phi \cdot \beta_t\right)^+} \ge \Ex{y_t} - \phi \cdot \beta_t$. 
Hence, $\beta_t \ge \frac{1}{\phi + 1} \Ex{y_t} = \frac{1}{\phi^2} \Ex{y_t}$, which concludes the proof of the second inequality.

We next prove the first inequality by induction on the number of remaining boxes. We use $t=n$ as the base case of our induction, which is satisfied since $\alpha_n=\beta_n=0$. Assume that $\alg_{t+1}\ge \beta_t$. We need to prove that $\alg_{t}\ge \beta_{t-1}$. First, observe that
\begin{align}
\label{eq:algt_beta}
\alg_t & = \Ex[v_t]{\idr{v_t \geq \tau_t} \cdot v_t + \idr{v_t < \tau_t} \cdot \alg_{t+1}}\nonumber \\
& \ge \Ex[v_t]{\idr{v_t \geq  \tau_t} \cdot v_t + \idr{v_t < \tau_t} \cdot \beta_t}
\ge \beta_t,
\end{align}
where the first inequality holds by the induction hypothesis, and to obtain the second inequality we observe that $\tau_t\ge\beta_t$.
Consider the difference function $f(x)\eqdef\Ex{\left(y_{t-1} - \phi \cdot x\right)^+} -x$ (see Equation \eqref{eq:beta_def}). $f$ is strictly decreasing, with $f(x)=0$ for $x=\beta_{t-1}$, by definition of $\beta_{t-1}$. Therefore, to prove that $\alg_{t}\ge\beta_{t-1}$ it is sufficient to prove that $f(\alg_t)\le 0$. We have 
\begin{eqnarray}
    f(\alg_t) & = &\Ex[y_{t-1}]{\left(y_{t-1}-\phi\cdot\alg_t\right)^+}-\alg_t  
    \le \Ex[y_{t-1}]{\left(y_{t-1}-\phi\cdot\beta_t\right)^+}-\alg_t \nonumber\\ & =& 
    \Ex[y_t,v_t]{\left(\max(y_t,v_t)-\phi\cdot\beta_t\right)^+}-\alg_t  \nonumber \\
    & \le &  \Ex[y_t,v_t]{\left(\max(y_t,\tau_t)-\phi\cdot\beta_t\right)^{+} + (v_t-\tau_t)^{+}}-\alg_t\nonumber \\
    &\le &  \Ex[y_t]{\left(\max(y_t,\tau_t)-\phi\cdot\beta_t\right)^{+}}
    + \Ex[v_t]{(v_t-\tau_t)^{+}} - \Ex[v_t]{\idr{v_t \geq \tau_t} \cdot v_t + \idr{v_t < \tau_t} \cdot \beta_t}\nonumber\\
    & = &  \Ex[y_t]{\left(\max(y_t,\tau_t)-\phi\cdot\beta_t\right)^{+}} -
    \Ex[v_t]{\idr{v_t\geq  \tau_t} \cdot \tau_t + \idr{v_t < \tau_t} \cdot \beta_t} \nonumber\\
    & \le & \Ex[y_t]{\left(\max(y_t,\tau_t)-\phi\cdot\beta_t\right)^{+}} - \beta_t, \label{eq:algt_longderivation}
\end{eqnarray}
where the first inequality follows since $\alg_t \geq \beta_t$ by Equation \eqref{eq:algt_beta}; the second inequality holds since $(\max(a,b)-c)^+\le (\max(a,d)-c)^+ + (b-d)^+$ for any $a,b,c,d\in\mathbbm{R}$; to get the third inequality, we use the first part of Equation \eqref{eq:algt_beta}; the last inequality follows not only in expectation over $v_t$ but for any fixed value $\theta_t$ of $v_t$, as $\tau_t=\max(\alpha_t,\beta_t)\ge\beta_t$. 
Furthermore,
\begin{multline*}
f(\alg_t) \stackrel{\eqref{eq:algt_longderivation}}{\le}  \Ex[y_t]{\left(\max(y_t,\tau_t)-\phi\cdot\beta_t\right)^{+}}-\beta_t  =  
\Ex[y_t]{\max\left\{\left(y_t-\phi\cdot\beta_t\right)^{+},\left(\tau_t-\phi\cdot\beta_t\right)^{+}\right\}}-\beta_t\\
\le
\Ex[y_t]{\left(y_t-\phi\cdot\beta_t\right)^{+} + \left(\tau_t-\phi\cdot\beta_t\right)^{+}}-\beta_t
 =
\left(\Ex[y_t]{\left(y_t-\phi\cdot\beta_t\right)^{+}}-\beta_t\right) + {\left(\tau_t-\phi\cdot\beta_t\right)^{+}}
 = 0,
\end{multline*}
where the first inequality is precisely \eqref{eq:algt_longderivation}; the second inequality follows by observing that $\max(a,b)\le a+b$ for any $a,b\in\\mathbbm{R}_+$; the last equality follows by observing that both terms equal $0$. The first term ($\Ex[y_t]{\left(y_t-\phi\cdot\beta_t\right)^{+}}-\beta_t$) equals $0$ by the definition of $\beta_t$, and the second term (${\left(\tau_t-\phi\cdot\beta_t\right)^{+}}$) equals $0$ by recalling that $\tau_t = \max (\alpha_{t}, \beta_{t})$ and by the fact that 
$\beta_t \ge \frac{1}{\phi} \alpha_t$ proved above.
Thus $f(\alg_t)\le 0$ and $\alg_t\ge \beta_{t-1}$, which concludes the proof of the induction step. 
\end{proof}

We are now ready to prove Theorem~\ref{th:pouo-det}.
We prove the statement of the theorem 
by induction on the total number of boxes $n$. For the base case ($n=1$), $\alpha_1=\beta_1=0$ and $\alg_n=\opt_n$.
Suppose that the statement of the theorem holds for any $n-1$ boxes. 
Let $\opt_t$ 
denote the expected value of the optimal order-aware algorithm 
on the boxes $t, \ldots, n$. 
We shall prove the induction step that $\alg_1=\alg\ge\frac{1}{\phi}\opt=\frac{1}{\phi}\opt_{1}$ for the case of $n$ boxes. By the induction hypothesis we have $\alg_{2} \geq \frac{1}{\phi}\opt_{2}$. 
We consider four cases based on the realized value $\theta_1$ of the first box. We denote by $\alg(\theta_1)$ and $\opt(\theta_1)$ the respective expected values of our algorithm and the optimal order-aware algorithm, given that the value in the first box is $v_1=\theta_1$. We show that $\alg(\theta_1)\ge\frac{1}{\phi}\opt(\theta_1)$ for any $\theta_1$.
	\begin{description}
		\item[Case 1] Both $\alg$ and $\opt$ stop and take value $\theta_1$. Then, $\alg_1(\theta_1)=\opt_1(\theta_1)=\theta_1$.
		\item[Case 2] $\alg$ takes value $\theta_1$ but $\opt$ doesn't. Then, $\alg(\theta_1) \ge \alpha_1 = \frac{1}{\phi}\Ex{y_1} \ge \frac{1}{\phi}\opt_2= \frac{1}{\phi}\opt_1(\theta_1)$, {where the second inequality is} since $\opt_2$ cannot do better than the prophet on boxes $t\in\{2,\ldots,n\}$.
		\item[Case 3] $\opt$ takes $\theta_1$, but $\alg$ doesn't. It holds that $\alg_1(\theta_1) = \alg_{2} \ge \max(\beta_1,\frac{1}{\phi}\alpha_1) \ge \frac{1}{\phi}\max(\beta_1,\alpha_1) \ge \frac{1}{\phi}\opt_1$, where the first inequality follows by  Lemma~\ref{lem:key-det}, and the last inequality holds since $\alg$ rejected $\theta_1<\tau_1$, whereas $\opt$ selected it (thus $\opt(\theta_1)=\theta_1 < \tau_1 = \max(\beta_1,\alpha_1)$).
		\item [Case 4] Neither $\alg$ nor $\opt$ takes $\theta_1$. Then, $\alg(\theta_1)=\alg_{2}$ and $\opt(\theta_1)=\opt_{2}$, and the claim holds by the induction hypothesis.
	\end{description}
Therefore, $\alg=\Ex[v_1]{\alg(v_1)}\ge\Ex[v_1]{\frac{\opt(v_1)}{\phi}}
=\frac{1}{\phi}\opt$. This concludes the proof.
\end{proof}

We next show that the above bound is tight, namely that no order-unaware deterministic algorithm may achieve an order-competitive ratio better than the golden ratio $\frac{1}{\phi}= \frac{2}{\sqrt{5}+1}$. 

\begin{theorem}
	For the objective of maximizing the expected value, no deterministic order-unaware algorithm achieves a better order-competitive ratio than $\frac{1}{\phi}$ in the worst case. 
\end{theorem}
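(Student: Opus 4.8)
The plan is to build, for every target ratio $r>\tfrac1\phi$, a single finite instance on which \emph{every} deterministic order-unaware algorithm $\alg$ loses a factor of $r$ to the best order-aware algorithm, with the loss realized by an arrival order tailored to $\alg$; sending $r\downarrow\tfrac1\phi$ then gives the theorem. The instance I would use is a ``caterpillar'' extension of Example~\ref{ex:gap}: one \emph{random} box $R$ that takes value $1/\eps$ with probability $\eps$ and $0$ otherwise (so $\Ex{R}=1$; $\eps$ is sent to $0$ at the very end), together with $m$ \emph{deterministic} boxes of strictly decreasing values $t_1>t_2>\cdots>t_m>0$ to be chosen below. Given $\alg$, the adversary reveals the deterministic boxes in decreasing order, $t_1,t_2,\ldots$; the moment $\alg$ accepts some $t_k$ it inserts $R$ immediately afterwards and then lists the remaining $t_{k+1},\ldots,t_m$; and if $\alg$ rejects all of $t_1,\ldots,t_m$ it appends $R$ last. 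Along a deterministic prefix an order-unaware $\alg$ only ever sees the deterministic values, so its accept/reject decisions along $t_1,t_2,\ldots$ are predetermined; hence this rule produces a single well-defined order $\pi_\alg$, and the realization of $R$ --- irrelevant until $\alg$ has already committed --- does not interfere.

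\noindent\textbf{Bounding the ratio on $\pi_\alg$.}
If $\alg$ accepts some $t_k$ then $\alg(\pi_\alg)=t_k$, whereas $\opt$, seeing the order $(t_1,\ldots,t_k,R,t_{k+1},\ldots,t_m)$, can either take $t_1$ or skip the first $k$ boxes and take $R$ whenever it equals $1/\eps$, otherwise falling back to $t_{k+1}$; so I would use $\opt(\pi_\alg)\ge\max\!\big(t_1,\,1+(1-\eps)t_{k+1}\big)$, with the convention $t_{m+1}=0$. If instead $\alg$ rejects everything, it is left only with $R$, so $\alg(\pi_\alg)=\Ex{R}=1$, while $\opt(\pi_\alg)\ge t_1$ just by taking $t_1$. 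Hence for \emph{every} such $\alg$,
\[
\frac{\alg(\pi_\alg)}{\opt(\pi_\alg)}\ \le\ \max\!\left(\ \max_{1\le k\le m}\frac{t_k}{\max\!\big(t_1,\,1+(1-\eps)t_{k+1}\big)}\ ,\ \frac1{t_1}\ \right),
\]
and I note that only \emph{lower} bounds on $\opt$ are needed, so there is no optimal-stopping computation to do.

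\noindent\textbf{Choosing the $t_k$, and where $\phi$ comes from.}
I would fix $r\in(\tfrac1\phi,1)$, set $t_1=1/r$ and $t_{k+1}=t_k/r-1$, and take $m$ to be the first index with $t_m<1$. The hypothesis $r>\tfrac1\phi$ is \emph{exactly} $r^2+r-1>0$, i.e.\ $\tfrac1r<\tfrac r{1-r}$, which makes the affine map $x\mapsto x/r-1$ lie strictly below the identity on $[\,0,\tfrac r{1-r})$; starting from $t_1=\tfrac1r$ this forces the sequence to be strictly decreasing and to stay positive through $t_m$, with $t_m\in[\tfrac1r-1,1)$, so $m$ is finite (and $m\to\infty$ as $r\downarrow\tfrac1\phi$). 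This is precisely where the golden ratio enters: the recursion is $t_k=r(1+t_{k+1})$, whose fixed point as $r\to\tfrac1\phi$ is $\phi$ (since $\phi=\tfrac1\phi(1+\phi)$). With these values $\tfrac1{t_1}=r$; for every $k<m$ one has $t_{k+1}\ge t_m\ge\tfrac1r-1$, hence $1+t_{k+1}\ge\tfrac1r=t_1$, and the $k$-th term of the maximum becomes $\tfrac{t_k}{1+(1-\eps)t_{k+1}}=r+O(\eps)$ by the recursion, while the $m$-th term is $\tfrac{t_m}{t_1}=rt_m<r$. So the displayed bound is $r+O(\eps)$. Letting $\eps\to0$ shows $\Gamma(\alg)\le r$ on this instance, hence the order competitive ratio of any deterministic order-unaware $\alg$ is at most $r$; letting $r\downarrow\tfrac1\phi$ finishes the proof.

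\noindent\textbf{Where the difficulty lies.}
The hard part will be \emph{finding} this family and pinning down the threshold: a three-box gadget as in Example~\ref{ex:gap} only reaches $1/\sqrt2$, so one is forced to chain many deterministic boxes, and the natural recursion $t_k=r(1+t_{k+1})$ turns out to be strictly monotone exactly when $r>1/\phi$. Everything else is routine --- the well-definedness of $\pi_\alg$ for an arbitrary (even pathological) deterministic $\alg$, including the sub-cases where $\alg$ accepts $t_m$ or reaches and accepts $R$; the elementary lower bounds on $\opt$; and the book-keeping guaranteeing $t_m\in[\tfrac1r-1,1)$ so that every $k<m$ term falls in the regime $1+t_{k+1}\ge t_1$.
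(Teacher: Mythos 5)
Your proposal is correct and follows essentially the same construction as the paper's proof: a decreasing chain of deterministic values plus one high-variance box, with the adversary inserting the random box immediately after the point where the (predetermined) order-unaware algorithm accepts, and forcing the algorithm down to expected value $\approx 1$ if it never accepts. The only difference is cosmetic parameterization --- the paper uses a dense arithmetic grid $\phi,\phi-\eps,\ldots,1$ and bounds the ratio by $\frac{x}{1+x}\le\frac{1}{\phi}$, while you use the recursion $t_k=r(1+t_{k+1})$ for a target ratio $r>\frac{1}{\phi}$ and then let $r\downarrow\frac{1}{\phi}$.
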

\begin{proof}
    Consider an instance that consists of a set of boxes with deterministic values $\phi, \phi-\eps, \phi-2\eps, \ldots, 1$, and a single random box $HV$ (we call it a high variance box) with value $1/\eps$ realized with probability $\eps$ and value $0$ realized otherwise. 
	Let $\alg$ be any given order-unaware deterministic algorithm. 
	Let $\pi$ be an arrival order where the deterministic boxes arrive first, in decreasing order: $\phi, \phi-\eps, \phi-2\eps, \ldots, 1$, followed by the $HV$ box. 
	
	\begin{description}
	\item[Case 1] $\alg$ accepts some deterministic value $x > 1$. 
	Consider now another arrival order $\pi_x$ which is the same as $\pi$ up to the deterministic $x$ box, but with the $HV$ box arriving immediately after it, and followed by the remaining deterministic boxes $x-\epsilon, \ldots, 1$ in any order.
	Then, $\alg$ achieves value $x$ (we slightly abuse notations, and denote it $\alg(\pi_x)=x$), whereas $\opt$ for $\pi_x$ achieves $\opt(\pi_x)=  \eps \cdot 1/\eps +(1-\eps)(x-\eps)$, by waiting for the $HV$ box and taking it when its realized value is $1/\eps$ (otherwise $\opt$ takes $x-\eps$). 
	As $\eps$ goes to 0, $\alg/\opt$ goes to
	$\frac{x}{1+x}\le \frac{\phi}{1+\phi}= \frac{1}{\phi}$, where the inequality follows since $x\le \phi$.
	
	\item[Case 2] $\alg$ waits for the last deterministic item or the $HV$ box. Then, $\alg(\pi)=1$, whereas $\opt(\pi)$ selects the first item (with value $\phi$), leading to an order-competitive ratio of $\frac{1}{\phi}$.
	\end{description}
\end{proof}

\subsection{Single-Threshold Algorithms}
\label{sec:expectation-st}
In this section we study the order-competitive ratio with respect to single-threshold algorithms. 
Interestingly, the exact same bound of the inverse of the golden ratio holds also with respect to single-threshold algorithms. 
That is, we provide a single-threshold deterministic order-unaware algorithm whose ratio with respect to the optimal single-threshold order-aware algorithm is at least the inverse of the golden ratio (i.e., $\frac{1}{\phi} \approx 0.618$), and this is tight.

\paragraph{An order-unaware algorithm.}
Let $\tau$ be the (unique) value satisfying

\begin{equation}
    \Ex{\vmax \cdot \idr{\vmax \ge \tau}} \geq \phi \cdot \tau \geq \Ex{\vmax \cdot \idr{\vmax > \tau}}, \label{eq:tau}
\end{equation}
where $\vmax=\max_{i}v_i$ is the random variable equal to the maximum value of $v_1,\ldots,v_n$.

\begin{theorem}\label{thm:single}
For every   arrival order $\pi$, and for every threshold $\tau'$, the performance of the single-threshold algorithm $\alg_\tau$ is at least $\frac{1}{\phi}$-competitive against the single-threshold algorithm $\alg_{\tau'}$. I.e., for every $\pi,\tau'$, it holds that $$\alg_\tau(\pi) \geq \frac{1}{\phi} \alg_{\tau'}(\pi).$$
\end{theorem}

\begin{proof}
Fixing the arrival order $\pi$, and the threshold $\tau'$,
we first show that for the threshold $\tau''=\min(\tau',\alg_{\tau'}(\pi))$ it holds that $\alg_{\tau''}(\pi) \geq \alg_{\tau'}(\pi)$, i.e., $\tau''$ is at least as good as  $\tau'$.
\begin{lemma}
$\alg_{\tau''}(\pi) \geq \alg_{\tau'}(\pi)$.
\end{lemma}
\begin{proof}
If $\tau' \leq \alg_{\tau'}(\pi)$, then $\tau''=\tau'$ and $\alg_{\tau''}(\pi) = \alg_{\tau'}(\pi)$, in which case we are done.

Assume $\tau'>\alg_{\tau'}(\pi)$.
Let $\alg_{\tau'}^t(\pi) = \Ex{\alg_{\tau'}(\pi) \mid v_1,\ldots, v_{t-1} <\tau' }$ denote the expected value obtained by running the threshold $\tau'$ given that boxes
$1, \ldots, t-1$ were below $\tau'$. We first establish the following useful bound on the performance of $\alg_{\tau'}^t(\pi)$.

\begin{claim}
\label{cl:optt}
If $\tau' > \alg_{\tau'}(\pi)$, then for every $t=1,\ldots,n$, it holds that $\alg_{\tau'}(\pi) \geq \alg_{\tau'}^t(\pi)$.
\end{claim}
\begin{proof}
It holds that 
\begin{eqnarray*}
\alg_{\tau'}(\pi) & = & \Ex{\alg_{\tau'}\mid \exists i<t, v_i \geq \tau'
}\cdot \Prx{ \exists i<t, v_i \geq \tau'}) +  (1-\Prx{ \exists i<t, v_i \geq \tau'}) \cdot \alg_{\tau'}^t(\pi) \\ &\geq &
\tau' \cdot \Prx{ \exists i<t, v_i \geq \tau'}) +  (1-\Prx{ \exists i<t, v_i \geq \tau'}) \cdot \alg_{\tau'}^t(\pi) \\ &\geq&
\alg_{\tau'}(\pi) \cdot \Prx{ \exists i<t, v_i \geq \tau'}) +  (1-\Prx{ \exists i<t, v_i \geq \tau'}) \cdot \alg_{\tau'}^t(\pi) ,
\end{eqnarray*}
where the first inequality follows by $\Ex{\alg_{\tau' }\mid \exists i<t, v_i \geq \tau'
}\geq \tau'$, since $\alg_{\tau'}$ must stop before $t$ and accept value $v\ge\tau'$ under the condition $\exists i<t, v_i\ge\tau'$. 
The second inequality follows by the assumption that $\tau' > \alg_{\tau'}(\pi)$.
Since $\tau'>\alg_{\tau'}(\pi)$, we get that the probability that we stop by time $t$ is less than $1$, thus, by rearranging and dividing by $1-\Prx{ \exists i<t, v_i \geq \tau'}$ which is strictly positive, we get
$\alg_{\tau'}(\pi) \geq \alg_{\tau'}^t(\pi)$, as desired.
\end{proof}

Next we show that $\alg_{\tau''}(\pi) \geq \alg_{\tau'}(\pi) $. Consider a fixed valuation profile $\vals$ and the following two situations.

First, if $\alg_{\tau''}$ and $\alg_{\tau'}$ accept the same box in $\vals$, then clearly $\alg_{\tau''}$ and $\alg_{\tau'}$ obtain the same value.

Second, let $t$ be the first index where $\alg_{\tau'}$ and $\alg_{\tau''}$ diverge.
Since $\tau''=\alg_{\tau'}(\pi)<\tau'$, it means that $\alg_{\tau''}$ accepts box $t$, while $\alg_{\tau'}$ doesn't.
Let $\alg_{\tau''}(v_1,\ldots,v_t)$ and $\alg_{\tau'}(v_1,\ldots,v_t)$ denote the expected value of $\alg_{\tau''}$ and $\alg_{\tau'}$, respectively, with fixed values $v_1, \ldots, v_t$, and where expectation is taken over $v_{t+1}, \ldots, v_n$.
Since $\alg_{\tau''}$ stops at box $t$, it holds that $\alg_{\tau''}(v_1,\ldots,v_t) \geq \tau''$.
Since $\alg_{\tau'}$ doesn't stop at box $t$, it holds that $\alg_{\tau'}(v_1,\ldots,v_t) =\alg_{\tau'}^{t+1}(\pi)  \leq \alg_{\tau'}(\pi)  = \tau''$, where the inequality follows by Claim~\ref{cl:optt}.
It follows that $\alg_{\tau''}(\pi) \geq \alg_{\tau'}(\pi)$, as desired.
\end{proof}

We now distinguish between two cases.

\paragraph{Case 1:} $\tau'' \geq \tau$. We assume that $\tau''>\tau$ (when $\tau''=\tau$ we are done).
Then
$$
\alg_{\tau''}(\pi) \le \Ex{ \vmax \cdot \idr{\vmax \ge \tau''}} \le \Ex{ \vmax \cdot \idr{\vmax > \tau}} \leq \phi \cdot \tau.
$$
Thus, it remains to show that $\alg_{\tau}(\pi) \ge \tau$.
We next use that $\alg_{\tau''}(\pi) \geq \alg_{\tau'}(\pi)  \geq \tau''$. 

\begin{eqnarray}
\label{eq:tautau}
\tau'' & \leq & \alg_{\tau''}(\pi) \nonumber\\
& = & \sum_{t=1}^{n} \Prx{v_1,\ldots,v_{t-1} < \tau''}\cdot \Ex{(v_t-\tau'')^+} + 
(1-\Prx{v_1,\ldots,v_{n} < \tau''})\cdot \tau''\nonumber\\
& = & \sum_{t=1}^{n} \prod_{i=1}^{t-1} \Prx{v_i < \tau''}\cdot \Ex{(v_t-\tau'')^+} + 
\left(1-\prod_{t=1}^{n}\Prx{v_t < \tau''}\right)\cdot \tau''.
\end{eqnarray}
Rearranging, we get
$$
\sum_{t=1}^{n} \prod_{i=1}^{t-1} \Prx{v_i < \tau''}\cdot \Ex{(v_t-\tau'')^+} \geq \tau''\cdot \prod_{t=1}^{n}\Prx{v_t < \tau''},
$$
which is equivalent to 
$$
\sum_{t=1}^{n} \Ex{(v_t-\tau'')^+} \prod_{i=t}^{n} \frac{1}{\Prx{v_i < \tau''}} \geq \tau''.
$$
Since $\tau < \tau''$ (by Case 1), the last inequality must hold also when replacing $\tau''$ by $\tau$. Indeed, the LHS increases while the RHS decreases. That is:
\begin{equation}
\label{eq:one_over_prod_prob}
\sum_{t=1}^{n} \Ex{(v_t-\tau)^+} \prod_{i=t}^{n} \frac{1}{\Prx{v_i < \tau}} \geq \tau.
\end{equation}
Similar to the derivation of \eqref{eq:one_over_prod_prob} from \eqref{eq:tautau} with $\tau''$ replaced by the threshold $\tau$ and $\alg_{\tau''}(\pi)$ by $\alg_{\tau}(\pi)$, inequality \eqref{eq:one_over_prod_prob} implies that 
$\alg_{\tau}(\pi) \geq \tau$, which concludes the proof of the first case.

\paragraph{Case 2:} $\tau'' < \tau$. Then
\begin{equation}
\alg_{\tau''}(\pi)  \le \alg_{\tau}(\pi) + \Prx{\vmax < \tau} \cdot \tau, \label{eq:algtau2}
\end{equation}
since for any valuation profile $\vec{v}$ we have either $\alg_{\tau}(\pi)\ge\alg_{\tau''}(\pi)$ whenever $\alg_{\tau}$ stops (i.e., $\vmax\ge\tau$), or $\alg_{\tau''}(\pi)<\tau$ whenever $\vmax<\tau$ (i.e., $\alg_{\tau}$ does not stop). Furthermore,
\begin{eqnarray}
\alg_{\tau}(\pi) & \ge &\Prx{\vmax \ge \tau} \cdot \tau + \Prx{\vmax < \tau } \cdot \Ex{(\vmax - \tau)^+ } \nonumber\\
& \ge & \Prx{\vmax < \tau } \cdot \left( \Ex{(\vmax - \tau)^+ } + \Prx{\vmax \ge \tau} \cdot \tau \right) \nonumber\\
& \geq &\Prx{\vmax < \tau } \cdot \phi \cdot \tau, \label{eq:algtau}
\end{eqnarray}
where the first inequality is by the standard decomposition of the expected value of a threshold algorithm $\alg_{\tau}$ into revenue and surplus, and the last inequality is equivalent to Equation~\eqref{eq:tau}.
Combining Equations~\eqref{eq:algtau2} and \eqref{eq:algtau}, it follows that 
$$
\frac{\alg_{\tau''}(\pi)}{\alg_{\tau}(\pi)} \leq 1+\frac{1}{\phi} = \phi.
$$
\end{proof}

\begin{remark}
We note that the performance guarantee of Theorem~\ref{thm:single} applies to single threshold algorithms that select the first value that is strictly greater than $T$ (rather than  at least $T$). This is since such an algorithm with threshold $T$ can be interpreted as   $\lim_{T'\rightarrow T^+}\alg_{T'}$.
\end{remark}

We next show that the above bound is tight, namely that no single-threshold order-unaware deterministic algorithm may achieve an order-competitive ratio (with respect to the optimal single-threshold algorithm) better than the inverse of the golden ratio $\frac{1}{\phi}= \frac{2}{\sqrt{5}+1}$.

\begin{theorem}
	For the objective of maximizing the expected value, no deterministic single-threshold order-unaware algorithm achieves a better order-competitive ratio (with respect to the optimal single-threshold algorithm) than $\frac{1}{\phi}$ in the worst case. 
\end{theorem}

\begin{proof}
    Consider an instance that consists of two boxes. One box with a deterministic value 1, the other with value $\phi/\eps$ with probability $\eps$ (and $0$ otherwise).
If the threshold is at most $1$, then the deterministic box arrives first, and we get $\alg=1$ and $\opt=\phi$. 
If the threshold is greater than $1$, then the deterministic box arrives last, and we get $\alg=\phi$, while $\opt=\phi+1$.
In any case the ratio is $1/\phi$.
\end{proof}

\section{Maximizing the Probability of Catching the Maximum Value}
\label{sec:prob}
In this section we study the objective function of maximizing the probability to catch the maximum value. Section~\ref{sec:max-prob-general} gives the order-competitive ratio with respect to adaptive algorithms, and Section~\ref{sec:prob-st} gives the order-competitive ratio with respect to single-threshold algorithms.

\subsection{Adaptive Algorithms}
\label{sec:max-prob-general}

Our main result in this section is a deterministic order-unaware algorithm that obtains an order-competitive ratio of $\ln \frac{1}{\lambda} \approx 0.806$, where $\lambda\approx 0.4464$ is the unique solution to $\frac{x}{1-x} = \ln \frac{1}{x}$, with respect to the objective of maximizing the probability to catch the maximum value. 

To prove this result, we consider a slightly more general game: let there be an extra number $\theta$ given in advance, and our objective is to maximize the probability of catching the box with the largest value that exceeds $\theta$. If all boxes have values less than $\theta$, no algorithm wins. 

From now on, we shall work on this variant of the problem. Observe that the original problem is a special case where $\theta=0$.

\paragraph{An order-unaware algorithm.} At round $t \in [n]$, let $v_s = \theta_s$ be the realized values for each $s \le t$. Let $\theta_0=\theta$. We accept the current box $v_t=\theta_t$ if it satisfies the following condition: 
\[
\theta_t = \max_{0 \le s \le t} \theta_s \quad \text{and} \quad
\Prx[v_{t+1},\ldots,v_n]{\max_{t+1 \le s \le n} v_s < \theta_t} \ge \lambda~,
\]
where $\lambda$ is the unique solution to $\frac{x}{1-x} = \ln \frac{1}{x}$.

Note that our algorithm is order-unaware since calculating the probability $\Prx{\max_{t+1 \le s \le n} v_s < \theta_t}$ requires no information on the order of remaining boxes. Indeed, it is the probability that all remaining boxes have value less than $\theta_t$.
We use $\alg(\pi)$ to denote the probability of catching the maximum value that exceeds $\theta$ by our order-unaware algorithm and $\opt(\pi)$ to denote the winning probability of the best order-aware algorithm when the actual arrival order is $\pi$. 
\begin{theorem}
\label{thm:catching_max}
	For every arrival order $\pi$, the probability of catching the maximum value of the algorithm satisfies
	$\alg(\pi) \ge \ln \frac{1}{\lambda} \cdot \opt(\pi)$.
\end{theorem}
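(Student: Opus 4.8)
Restate the algorithm: at step $t$ call $\theta_t$ a \emph{record} if $\theta_t=\max_{0\le s\le t}\theta_s$ (with $\theta_0=\theta$), and for a value $x$ let $g_t(x)=\Prx[v_{t+1},\dots,v_n]{\max_{s>t}v_s<x}$ denote its \emph{survival probability}; then $\alg$ accepts the first record whose survival probability is at least $\lambda$. I would prove Theorem~\ref{thm:catching_max} by induction on the number of boxes $n$, carried out in parallel with an auxiliary statement that controls $\alg$'s absolute performance on ``demanding'' instances: \emph{(Aux): on every instance $\mathcal G$ of the generalized game with parameter $\theta$ for which $c:=\Prx{v_s<\theta\ \forall s}<\lambda$, one has $\alg(\mathcal G)\ge(\ln\tfrac1\lambda)\,c$.}

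For the main induction, fix $\pi$ and condition on the value $\theta_1$ of the first box, writing $\alg(\theta_1),\opt(\theta_1)$ for the winning probabilities of $\alg$ and of the optimal order-aware algorithm given $v_1=\theta_1$. If $\theta_1\le\theta$, no accepted value can exceed $\theta$, so both pass box~$1$ and continue on boxes $2,\dots,n$ with parameter $\theta$, and the hypothesis for $n-1$ applies. If $\theta_1>\theta$, put $q=g_1(\theta_1)=\Prx{v_s<\theta_1\ \forall s\ge2}$ and let $\opt'$ be the optimal order-aware continuation value on boxes $2,\dots,n$ with parameter $\theta_1$; since winning now requires some later value above $\theta_1$, $\opt'\le 1-q$, and $\opt(\theta_1)=\max(q,\opt')$. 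Then: \textbf{(A)} if $q\ge\lambda$, $\alg$ accepts $\theta_1$ and wins with probability $q$, while $\opt(\theta_1)\le\max(q,1-q)$, and $q\ge(\ln\tfrac1\lambda)\max(q,1-q)$ follows from $q\ge\lambda$ and the defining identity $\tfrac{\lambda}{1-\lambda}=\ln\tfrac1\lambda$ (the subcase $q<\tfrac12$ reduces to $q\ge\lambda$, the subcase $q\ge\tfrac12$ to $\ln\tfrac1\lambda<1$); \textbf{(B)} if $q<\lambda$ and $\opt'\ge q$, then $\opt(\theta_1)=\opt'$, and since $q<\lambda$ the algorithm $\alg$ passes box~$1$ and continues on boxes $2,\dots,n$ with parameter $\theta_1$, so the main hypothesis for $n-1$ gives $\alg(\theta_1)\ge(\ln\tfrac1\lambda)\,\opt'$; \textbf{(C)} if $q<\lambda$ and $\opt'<q$, then $\opt(\theta_1)=q$, while $\alg$ passes box~$1$ and continues on boxes $2,\dots,n$ with parameter $\theta_1$ --- an instance with $c=q<\lambda$ --- so (Aux) for $n-1$ gives $\alg(\theta_1)\ge(\ln\tfrac1\lambda)\,q$. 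In every case $\alg(\theta_1)\ge(\ln\tfrac1\lambda)\,\opt(\theta_1)$, and taking expectation over $\theta_1$ closes the step; the base case $n=1$ has $\alg=\opt$.

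To establish (Aux), induct on $n$; the base case $n=1$ is $1-c\ge(\ln\tfrac1\lambda)\,c$ for $c<\lambda$, which holds since $\lambda\le\tfrac12$. For the step: if $\Prx{v_s<\theta\ \forall s\ge2}\ge\lambda$, then every record $\alg$ could meet on $\mathcal G$ has survival probability at least $\lambda$, so on $\mathcal G$ the algorithm is exactly the single-threshold policy ``accept the first value exceeding $\theta$''; the classical analysis of that policy --- it catches the maximum with probability at least $c'\ln\tfrac1{c'}$, where $c'$ is the probability that no value exceeds the threshold (see \cite{EsfandiariHLM20}) --- applied with $c'=c$ and combined with $c<\lambda$ gives $\alg(\mathcal G)\ge c\ln\tfrac1c\ge c\ln\tfrac1\lambda$. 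Otherwise $\Prx{v_s<\theta\ \forall s\ge2}<\lambda$, and peeling box~$1$, the branch ``$v_1\le\theta$'' alone already contributes $F_1(\theta)\cdot(\ln\tfrac1\lambda)\cdot\Prx{v_s<\theta\ \forall s\ge2}=(\ln\tfrac1\lambda)\,c$ by (Aux) for $n-1$ (applied to boxes $2,\dots,n$ with parameter $\theta$), the remaining branches being nonnegative.

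The hard part, I expect, is discovering and correctly phrasing (Aux): the plain induction comparing $\alg$ to $\opt$ collapses exactly in subcase~(C), where $\opt$ banks a record of survival probability $q<\lambda$ that $\alg$ declines, and the only bound then available from the main induction, $(\ln\tfrac1\lambda)\,\opt'$, is worthless because $\opt'$ can be far below $q$. What rescues the argument is that two facts dovetail: in the ``$\ge\lambda$'' regime $\alg$ degenerates to a fixed threshold whose classical $c\ln\tfrac1c$ guarantee already beats $c\ln\tfrac1\lambda$ precisely because $c<\lambda$, while in the ``$<\lambda$'' regime the lone non-record branch of the recursion already pays for the whole target. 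Verifying that these mesh so that both inductions close on every branch --- and that $\lambda$ is exactly the constant making subcase~(A) tight through $\tfrac{\lambda}{1-\lambda}=\ln\tfrac1\lambda$ --- is the technical core.
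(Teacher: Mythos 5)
Your proof is correct, and while it keeps the paper's overall frame --- the same generalized game with the extra parameter $\theta$, induction on $n$ with a case split on the first box, and the identity $\frac{\lambda}{1-\lambda}=\ln\frac{1}{\lambda}$ driving the ``accept'' case --- it resolves the hard case (where $\alg$ declines a value that $\opt$ banks) by a different device. The paper handles that case directly and non-inductively: it locates the switching index $t$ at which the tail survival probability first reaches $\lambda$, lower-bounds $\alg$ by the union of the events ``exactly one box among $t,\dots,n$ exceeds $\theta_1$,'' and uses $\frac{x}{1-x}\ge\ln\frac{1}{1-x}$ to get $\alg\ge\prod_{2\le s\le n}(1-p_s)\cdot\ln\bigl(1/\prod_{t\le s\le n}(1-p_s)\bigr)\ge\ln\frac{1}{\lambda}\cdot\opt$. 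You instead isolate an auxiliary invariant (Aux) --- an absolute guarantee $\alg\ge(\ln\frac{1}{\lambda})\,c$ on any instance whose no-value-exceeds-$\theta$ probability $c$ is below $\lambda$ --- prove it by a second induction, and feed it into your subcase (C). Unrolled, your (Aux) recursion reproduces the paper's computation exactly: the peeling branch (ii) advances one box at a time until the tail survival reaches $\lambda$ (this is the paper's switching point), and branch (i) then notes that $\alg$ degenerates to the fixed threshold $\theta$ and invokes the $c\ln\frac{1}{c}$ guarantee for that policy, which is precisely the paper's exactly-one-exceeds estimate done inline. So the two arguments bound the same quantity; yours buys a modular, potentially reusable statement (and a clean three-way case analysis via $\opt(\theta_1)=\max(q,\opt')$), at the cost of a second induction, while the paper's one-shot argument is shorter and self-contained. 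Two small repairs: the $c\ln\frac{1}{c}$ form is not stated as such in \cite{EsfandiariHLM20}, so you should include the one-line derivation $\sum_i p_i\prod_{j\ne i}(1-p_j)=c\sum_i\frac{p_i}{1-p_i}\ge c\ln\frac{1}{c}$ rather than cite it; and values tying the running maximum (e.g.\ $\theta_1=\theta$) need the paper's unique-maximum convention, which your argument, like the paper's, implicitly assumes.
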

\begin{proof}
For simplicity, we omit $\pi$ and write $\alg$ and $\opt$ instead of $\alg(\pi)$ and $\opt(\pi)$, respectively.
We prove the statement by induction on the number of boxes.
The base case when $n=1$ is trivial, since both our algorithm and the optimal algorithm would accept the first box with value $v_1$ if and only if $v_1 > \theta_0$.
Suppose the statement is correct for $n-1$ boxes and consider the case for $n$ boxes. We shall prove that for any realized value of $v_1$ the winning probability of our algorithm is at least $\ln \frac{1}{\lambda}$ times the winning probability of the optimal algorithm.

Consider the four cases depending on the behavior of our algorithm and the optimal algorithm on the realization of the first box $v_1$. 
\begin{description}
\item[Case 1] both $\alg$ and $\opt$ accept. In this case, $\alg = \opt$.
\item[Case 2] both $\alg$ and $\opt$ reject. In this case, we update the current maximum to $\theta_1=\max (\theta_0, v_1)$ and apply the induction hypothesis to conclude the proof. We remark that this is the place where we need the generalized version of the problem.
\item[Case 3] $\alg$ accepts and $\opt$ rejects. Then we have 
\[
\alg = \Prx{\max_{s\ge 2} v_s < \theta_1} \ge \lambda \quad \text{and} \quad \opt \le \Prx{\max_{s\ge 2} v_s \ge \theta_1} \le 1-\lambda,
\]
due to the second condition of our algorithm.
Therefore, $\alg \ge \frac{\lambda}{1-\lambda} \cdot \opt$.
\item[Case 4] $\alg$ rejects and $\opt$ accepts. In this case, we must have $\theta_1 \ge \theta_0$ and $\Pr[\max_{s \ge 2} v_s < \theta_1] < \lambda$. Let $2 \le t \le n$ be the index such that 
\begin{equation}
\label{eqn:switching_point}
\Prx{\max_{s>t} v_s < \theta_1} \ge \lambda \quad \text{and} \quad \Prx{\max_{s\ge t} v_s < \theta_1} < \lambda.
\end{equation}
We define $p_s = \Prx{v_s \ge \theta_1}$ for all $s \ge 2$.
Since the optimal algorithm takes the first box with realized value of $\theta_1$, its winning probability is
\begin{equation}
\label{eqn:opt}
\opt = \Prx{\max_{s \ge 2} v_s < \theta_1} = \prod_{2\le s\le n} (1-p_s).
\end{equation}
Next, we analyze our algorithm by studying the following events $A_s$ for $t \le s \le n$:
\[
A_s \eqdef \left\{ v_s \ge \theta_1 \text{ and } v_k < \theta_1, \forall k\ne s \right\}.
\]
\begin{claim}
\label{cl:alg-win}
Our algorithm wins if $A_s$ happens for any $t \le s \le n$.
\end{claim}
\begin{proof}
It suffices to show that our algorithm accepts box $s$. Indeed, $\alg$ does not stop before box $s$ since all other boxes have values smaller than $\theta_1$, violating the first condition of our algorithm. The second stopping condition of $\alg$ is satisfied for box $s$ as
\[
\Prx{\max_{k > s} v_k < \theta_s} \ge \Prx{\max_{k > s} v_k < \theta_1} \ge \Prx{\max_{k > t} v_k < \theta_1} \ge \lambda,
\]
where the last inequality is by the first inequality of \eqref{eqn:switching_point}. 
\end{proof}
 
Finally, we conclude the proof of the case:
\begin{multline*}
\alg \ge \Prx{\bigcup_{t \le s \le n} A_s} = \sum_{t \le s \le n} \left( p_s \cdot \prod_{\substack{2 \le k \le n \\ k\ne s}} (1-p_k) \right) = \prod_{2 \le k \le n} (1-p_k) \cdot \left( \sum_{t\le s \le n} \frac{p_s}{1-p_s}\right) \\
\ge \prod_{2 \le k \le n} (1-p_k) \cdot \left( \sum_{t\le s \le n} \ln \frac{1}{1-p_s}\right) = \prod_{2 \le k \le n} (1-p_k) \cdot \ln \left(\frac{1}{\prod_{t\le s \le n}(1-p_s)}\right) \ge \opt \cdot \ln\frac{1}{\lambda}.
\end{multline*}
The first inequality follows by Claim~\ref{cl:alg-win}.
The second inequality holds since $\frac{x}{1-x} \ge \ln \frac{1}{1-x}$ for any $x \in [0,1]$. The last inequality follows from Equations~\eqref{eqn:switching_point} and \eqref{eqn:opt}.
\end{description}
This finishes the proof of the inductive step.
\end{proof}

We next show that the above result is tight, namely that no order-unaware  deterministic algorithm may achieve an order-competitive ratio better than $\ln \frac{1}{\lambda} \approx 0.806$, where $\lambda$ is the unique solution to $\frac{x}{1-x} = \ln \frac{1}{x}$.
\begin{theorem}
	For the objective of catching the maximum value, no deterministic order-unaware algorithm achieves a better order-competitive ratio than $\ln \frac{1}{\lambda} \approx 0.806$.
\end{theorem}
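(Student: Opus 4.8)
The plan is to show that every deterministic order-unaware algorithm $\alg$ is already pinned by its very first decision, on a single family of instances parameterized by a small $p>0$. Fix $p$, choose $N=N(p)$ with $(1-p)^N=\lambda$ (so that $Np\to\ln\frac1\lambda$ as $p\to0$, and in particular the total odds $N\tfrac{p}{1-p}\to\ln\frac1\lambda<1$), and fix a tiny $\eps>0$ used only to break ties. The instance has one deterministic box $X_0$ of value $1$, together with $N$ high-variance boxes $Y_1,\dots,Y_N$, where $Y_i$ takes value $1+i\eps$ with probability $p$ and value $0$ otherwise. All realized values are distinct, so the maximum is unique; it is $X_0$ when no $Y_i$ fires, and otherwise it is the firing $Y_i$ of largest index. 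Because $\alg$ is order-unaware and deterministic, its action when $X_0$ is the first arrival (with realized value $1$) is a single fixed bit, and it is the same on every arrival order that begins with $X_0$; the argument splits on this bit, with the adversary choosing the rest of the order accordingly (this adaptive choice is legitimate precisely because $\alg$ is order-unaware).

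I would first handle Case A: $\alg$ accepts $X_0$ when $X_0$ arrives first. The adversary then uses $\pi_A=(X_0,Y_N,Y_{N-1},\dots,Y_1)$, i.e.\ the $Y_i$'s in decreasing order of value. Now $\alg$ commits to $X_0$, so $\alg(\pi_A)=\Pr[\text{no }Y_i\text{ fires}]=(1-p)^N=\lambda$. On the other hand, an order-aware algorithm can reject $X_0$ and accept the first $Y_i$ that fires; since the $Y_i$'s appear in decreasing order of value, that box is exactly the global maximum, so $\opt(\pi_A)\ge 1-(1-p)^N=1-\lambda$. Hence $\alg(\pi_A)/\opt(\pi_A)\le\tfrac{\lambda}{1-\lambda}=\ln\frac1\lambda$, using the defining equation of $\lambda$.

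Then I would handle Case B: $\alg$ rejects $X_0$ when $X_0$ arrives first. The adversary uses $\pi_B=(X_0,Y_1,Y_2,\dots,Y_N)$, i.e.\ the $Y_i$'s in increasing order of value. After declining $X_0$, the only way $\alg$ can still catch the maximum is by stopping at the firing $Y_i$ of largest index, i.e.\ at the \emph{last success} in the sequence $Y_1,\dots,Y_N$ of independent $\mathrm{Bernoulli}(p)$ trials; whatever $\alg$ does from that point on is some stopping rule for this last-success problem. Since the odds $N\tfrac{p}{1-p}$ sum to less than one, the optimal such rule accepts the first success, so the success probability of \emph{any} rule is at most $Np(1-p)^{N-1}=\tfrac{Np}{1-p}(1-p)^N\to\lambda\ln\frac1\lambda$ as $p\to0$. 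Thus $\alg(\pi_B)\le\lambda\ln\frac1\lambda+o(1)$, while $\opt(\pi_B)\ge(1-p)^N=\lambda$ (the order-aware optimum can simply accept $X_0$ immediately), giving $\alg(\pi_B)/\opt(\pi_B)\le\ln\frac1\lambda+o(1)$.

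Combining the two cases, every deterministic order-unaware algorithm has order competitive ratio at most $\ln\frac1\lambda+o_p(1)$ on this family, so no such algorithm can guarantee more than $\ln\frac1\lambda$, which proves the theorem. I expect the only real work to be in Case B: one must argue carefully that the residual behavior of $\alg$ after it declines $X_0$ is literally an instance of the last-success stopping problem, so that the optimal value there is a genuine upper bound on $\alg$'s winning probability, and then either cite Bruss's odds theorem or give the short direct argument that, when the odds sum below one, accepting the first success is optimal with value $Np(1-p)^{N-1}$. Everything else is bookkeeping --- in particular, tracking the $o(1)$ terms coming from $(1-p)^N=\lambda$ with $p\to0$ (including rounding $N$ to an integer), and checking that the tie-breaking perturbation $\eps$ does not affect any of the probabilities above.
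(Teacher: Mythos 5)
Your proposal is correct and is essentially the paper's own argument: one deterministic box arriving first, $N$ high-variance boxes with distinct values above it calibrated so that $(1-p)^N=\lambda$, and the adversary orders the high-variance boxes decreasingly or increasingly depending on whether $\alg$ accepts or rejects the deterministic box, with the defining equation $\frac{\lambda}{1-\lambda}=\ln\frac{1}{\lambda}$ making the two cases give the same bound. The only cosmetic difference is that in Case B you package the residual analysis as an invocation of the odds theorem, whereas the paper directly characterizes the post-rejection optimum as ``accept the first non-zero box from some index $s$ onward'' and maximizes $(n-s+1)\epsilon(1-\epsilon)^{n-s}$; these are the same calculation.
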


\begin{proof}
Consider an instance with $n+1$ boxes. One of the boxes has a deterministic value of $1/2$. Among the remaining $n$ boxes, the $i$-th box (for $i=1,\ldots,n)$ has value $v_i = i$ with probability $\eps$, and value $0$ otherwise, where $n\to\infty$, and $\eps$ is a small number that satisfies $(1-\eps)^n = \lambda$.
Let the deterministic box come first.
\begin{itemize}
\item Suppose the algorithm accepts the deterministic box. Then the algorithm wins when all the remaining boxes have value $0$, i.e. $\alg = (1-\eps)^n = \lambda$. Then, the remaining boxes arrive in decreasing order, i.e., $n,n-1,\ldots,1$. 
The optimal algorithm that knows the arrival order would reject the deterministic box and accept the first box that has non-zero value. This algorithm wins when at least one of the randomized boxes has non-zero value, i.e. $\opt = 1-(1-\eps)^n = 1-\lambda$. This gives us an order-competitive ratio of $\frac{\lambda}{1-\lambda}$.
\item Suppose {the} algorithm rejects the deterministic box. Then, the remaining boxes arrive in 
increasing order, i.e., $1,2,\ldots,n$. 
We study the best online algorithm that knows the order afterwards. It is straightforward to check that 1) the optimal algorithm only accepts non-zero boxes; 2) if the optimal algorithm accepts the $i$-th box when $v_i \ne 0$, it should also accept the $j$-th box for all $j \ge i$ when $v_j \ne 0$. I.e., such algorithm can be described by a single parameter $s \in [n]$ and it would simply accept the first non-zero box after the $s$-th box.
Its winning probability is
\begin{eqnarray*}
\alg  & = &\Prx{\text{exactly one randomized box of } \{s,s+1,\ldots,n\} \text{ has non-zero value}} \\
& = & (n-s+1) \cdot \eps \cdot  (1-\eps)^{n-s} \le n\cdot  \eps \cdot  (1-\eps)^{n-1}.
\end{eqnarray*}
The last expression approaches $\lambda \ln \frac{1}{\lambda}$ as $n\to\infty$.
On the other hand, the optimal order-aware algorithm $\opt$ would simply accept the deterministic box and 
win with probability $\lambda$ (if all remaining boxes have value $0$).  
Again, this leads to an order-competitive ratio of $\frac{\lambda \ln \frac{1}{\lambda}}{\lambda} = \ln \frac{1}{\lambda}$.
\end{itemize}
\end{proof}



\subsection{Single-Threshold Algorithms}
\label{sec:prob-st}


In this section, we provide a single-threshold deterministic order-unaware algorithm whose ratio with respect to the optimal single-threshold order-aware algorithm (with respect to the objective of maximizing the probability to catch the maximum value) is at least $\mu = \frac{\ln (1/\rho)}{\ln (1/\rho) + 1} \approx 0.646$, where $\rho \in [0,1] (\approx 0.1609)$ is the solution to the following equation. 
\begin{equation} \label{eq:rho}
\min_{\rho' \in (\rho, 1)} \left( \frac{\rho}{\rho'} + \frac{\rho}{1-\rho'} \cdot \ln\frac{\rho'}{\rho} \right) = \frac{\ln( 1/\rho)}{\ln(1/\rho) + 1}
\end{equation}
Moreover, the ratio is tight.
{Since the LHS (resp. the RHS) of Equation~\eqref{eq:rho} is strictly increasing (resp., decreasing) in $\rho$, it admits a unique solution.}

\paragraph{An order-unaware algorithm.}
Let $\tau$ be the (unique) threshold value satisfying
\begin{equation}
    \Prx{\max_{i} v_i < \tau} = \rho . \label{eq:prob_tau}
\end{equation}
The existence of $\tau$ is guaranteed by the atomless assumption. We remark that this is the only use of the assumption throughout the proof.

\begin{theorem}\label{thm:single-maxprob}
For every   arrival order $\pi$, and for every threshold $\tau'$, the performance of the single-threshold algorithm $\alg_\tau$ is at least $\mu$-competitive against the single-threshold algorithm $\alg_{\tau'}$. I.e., for every $\pi,\tau'$, it holds that $$\alg_\tau(\pi) \geq \mu \cdot \alg_{\tau'}(\pi).$$
\end{theorem}

\begin{proof}
We fix the arrival order $\pi$ and the threshold $\tau'$. We omit $\pi$ and write $\alg_\tau$ and $\alg_{\tau'}$ instead of $\alg_\tau(\pi)$ and $\alg_{\tau'}(\pi)$, respectively, to simplify notations. Let $\rho' = \Prx{\max_i v_i < \tau'}$. 
We consider the two following cases.
\paragraph{Case 1:} $\tau' \geq \tau$. We assume that $\tau'>\tau$ (when $\tau'=\tau$ the statement is obviously true). Then
\begin{equation}
\label{eq:bound_alg_tau_prime}
1-\rho'= \Prx{\max_{1 \le i \le n} v_i \ge \tau'} \ge
\alg_{\tau'}  = \sum_{1 \le i \le n} \Prx{\max_{j<i} v_j < \tau'} \cdot \Prx{v_i \ge \tau', \max_{j>i} v_j < v_i}.
\end{equation}
We write $\alg_{\tau}$ in a similar way to $\alg_{\tau'}$.
\begin{align*}
\alg_{\tau}  &= \sum_{1 \le i \le n} \Prx{\max_{j<i} v_j < \tau} \cdot \Prx{v_i \ge \tau, \max_{j>i} v_j < v_i} 
 = A + B, ~~\text{where}\\
& A\eqdef\sum_{1 \le i \le n} \Prx{\max_{j<i} v_j < \tau}  
 \cdot \Prx{v_i \ge \tau', \max_{j>i} v_j < v_i}\\
& B\eqdef \sum_{1 \le i \le n} \Prx{\max_{j<i} v_j < \tau} \cdot \Prx{v_i \in [\tau, \tau'), \max_{j>i} v_j < v_i} 
 \end{align*}
 Now, we have $\Prx{\max_{j<i} v_j < \tau}\ge
 \Prx{\max_{j<i} v_j < \tau'}\cdot\frac{\Prx{\max_{j} v_j<\tau}}{\Prx{\max_{j} v_j<\tau'}}$  in the $A$ term, as $\Prx{\max_j v_j<\tau'}=\Prx{\max_{j<i} v_j<\tau'}\cdot\Prx{\max_{j\ge i} v_j<\tau'}\geq \Prx{\max_{j<i} v_j<\tau'}\cdot\Prx{\max_{j\ge i} v_j<\tau}$. Thus,
\[
A\ge \frac{\Prx{\max_{j} v_j<\tau}}{\Prx{\max_{j} v_j<\tau'}}\cdot\sum_{1 \le i \le n} \Prx{\max_{j<i} v_j < \tau'}  
 \cdot \Prx{v_i \ge \tau', \max_{j>i} v_j < v_i}=\frac{\rho}{\rho'}\alg_{\tau'}.  
\]
In the $B$ term, $\Prx{v_i \in [\tau, \tau'), \max_{j>i} v_j < v_i}\ge \Prx{v_i \in [\tau, \tau')}\cdot\Prx{\max_{j>i} v_j < \tau}$. Thus,
\begin{eqnarray*}
B & \ge & \sum_{1 \le i \le n} \Prx{\max_{j<i} v_j < \tau} \cdot \Prx{v_i \in [\tau, \tau')}\cdot\Prx{\max_{j>i} v_j < \tau} \\ 
& = & \Prx{\max_j v_j<\tau}\cdot
\sum_{1\le i\le n}
\frac{\Prx{v_i < \tau'}-\Prx{v_i < \tau}}{ \Prx{v_i < \tau}}\ge \rho\cdot\sum_i\ln \frac{\Prx{v_i < \tau'}}{\Prx{v_i < \tau}}=\rho \cdot \ln \frac{\rho'}{\rho},
\end{eqnarray*}
where in the second inequality we use the fact that $x\ge \ln (1+x)$ and the definition of $\rho$. By combining these bounds on $A$ and $B$ together we get
\[
\alg_{\tau}\ge\frac{\rho}{\rho'} \cdot \alg_{\tau'} + \rho \cdot \ln\left(\frac{\rho'}{\rho}\right)\ge
\alg_{\tau'}\cdot\left(\frac{\rho}{\rho'}+
\frac{\rho}{1-\rho'} \cdot \ln\left(\frac{\rho'}{\rho}\right)
\right)\ge\mu\cdot\alg_{\tau'},
\]
where the second inequality follows by \eqref{eq:bound_alg_tau_prime}.
\paragraph{Case 2:} $\tau' < \tau$. We write a similar 
decomposition for $\alg_{\tau'}$ as we did for $\alg_{\tau}$ in case 1. 

\begin{multline*}
\alg_{\tau'} = \sum_{1 \le i \le n} \Prx{\max_{j<i} v_j < \tau'} \cdot \Prx{v_i \ge \tau, \max_{j>i} v_j < v_i}+ \\
   \sum_{1 \le i \le n} \Prx{\max_{j<i} v_j < \tau'}\cdot\Prx{v_i \in [\tau', \tau), \max_{j>i} v_j < v_i}\\ 
 \le \sum_{1 \le i \le n} \Prx{\max_{j<i} v_j < \tau} \cdot \Prx{v_i \ge \tau, \max_{j>i} v_j < v_i} + \Prx{\max_{i} v_i < \tau} 
 = \alg_\tau + \rho
\end{multline*}


We also have the following lower bound on $\alg_{\tau}$. 
\begin{align*}
\alg_{\tau} & \ge \sum_{1 \le i \le n} \Prx{v_i \ge \tau, \max_{j \ne i} v_j <\tau } = \rho \sum_{i} \frac{\Prx{v_i \ge \tau}}{\Prx{v_i < \tau}}  
\ge \rho\sum_{i}\ln\left(\frac{1}{\Prx{v_i<\tau}}\right)
= \rho \cdot \ln \frac{1}{\rho},
\end{align*}
where the second inequality follows by the fact that $\frac{x}{1-x}>\ln(\frac{1}{1-x})$ for $x\in[0,1)$. Therefore,
\begin{eqnarray*}
\alg_{\tau} & \ge& \alg_{\tau'}\cdot\frac{\alg_{\tau}}{\alg_{\tau}+\rho}=\alg_{\tau'}\cdot\left(1-\frac{\rho}{\alg_{\tau}+\rho}\right)\\
& \ge &\alg_{\tau'}\cdot\left(1-\frac{\rho}{\rho\ln(1/\rho)+\rho}\right)
=\alg_{\tau'}\cdot\frac{\ln(1/\rho)}{\ln(1/\rho)+1}=\mu\cdot\alg_{\tau'}
\end{eqnarray*}
\end{proof}


We next show that the guarantee of Theorem~\ref{thm:single-maxprob} is tight.
\begin{theorem}
	For the max-probability objective, no single-threshold order-unaware algorithm achieves a better order-competitive ratio (with respect to the optimal single-threshold algorithm) than $\mu\approx 0.646$ in the worst case. 
\end{theorem}

\begin{proof}
Consider an instance that consists of $n \rightarrow \infty$ boxes. 
For $i=1,\ldots,n$, the value of box $i$ is $i$ with probability $\eps$, and $0$ otherwise\footnote{{This instance can be transformed into an instance with atomless distributions by adding a small noise of less than $1$ to all values. The same proof applies to the noisy instance.}}, for $\eps \rightarrow 0$ such that $\eps \gg \frac{1}{n}$. 
Consider an arbitrary single-threshold algorithm, and let $T$ be its threshold. 
Without loss of generality, we assume that $T$ is an integer and the single-threshold algorithm always accepts the box $T$ when $v_T=T$, since $\Prx{v_T = T} = \eps \to 0$.
We distinguish between two cases, depending on the value of $T$.

\paragraph{Case 1:} The threshold $T$ satisfies $(1-\eps)^{n-T} \leq \rho$, where $\rho$ is the solution to Equation~\eqref{eq:rho}. 
Then, let $\rho' = (1-\eps)^{n-T}$, let
$\rho^*=\arg\min_{\rho'' \in (\rho',1)} \left( \frac{\rho'}{\rho''} + \frac{\rho'}{1-\rho''} \cdot \ln\frac{\rho''}{\rho'} \right) $, and let $T^*= \max\{T' \in [n] \mid (1-\eps)^{n-T'} \leq \rho^*\}$.
Since $\rho^*\geq \rho'$, it holds that $T^*\geq T$.

Consider an arrival order $\pi$, where
  \[
    \pi_i =
    \begin{cases*}
      \mbox{Box } i& if $1 \leq i <  T^*$ \\
      \mbox{Box } n+T^*-i & if
      $i \geq  T^*$
    \end{cases*}
  \]

Let $X_1$ (respectively, $X_2$, $X_3$) denote the random variable indicating the number of non-zero realized values below $T$ (respectively, between $T$ and $T^*$, and above $T^*$).
Thus, 
\begin{eqnarray} \label{eq:rhoprime}
\alg_T(\pi) & = &  \Prx{X_2=1 \wedge X_3=0} + \Prx{X_2=0 \wedge X_3 >0} \nonumber \\
 & \approx &(T^*-T) \cdot \eps \cdot (1-\eps)^{T^*-T} \cdot (1-\eps)^{n-T^*} + (1-\eps)^{T^*-T} \cdot (1-(1-\eps)^{n-T^*}) \nonumber \\ 
 &  = &(T^*-T) \cdot \eps \cdot (1-\eps)^{n-T} +(1-\eps)^{T^*-T} - (1-\eps)^{n-T}  \nonumber \\
 & \approx & \eps \cdot \rho' \cdot \log_{1-\eps} \frac{\rho'}{\rho^*} + \frac{\rho'}{\rho^*} - \rho' \approx  \rho' \ln \frac{\rho^*}{\rho'} + \frac{\rho'}{\rho^*} - \rho' .
\end{eqnarray}
On the other hand, 
\begin{equation} \label{eq:rhostar}
\alg_{T^*}(\pi) = \Prx{X_3>0} \approx 1- (1-\eps)^{n-T^*} \approx  1-\rho^*. 
\end{equation}
Combining Equations~\eqref{eq:rhoprime} and \eqref{eq:rhostar}, we get that $$ \frac{\alg_T(\pi)}{\alg_{T^*}(\pi) } \approx \frac{\rho' \ln \frac{\rho^*}{\rho'} + \frac{\rho'}{\rho^*} - \rho'}{ 1-\rho^*} = \frac{\rho'}{\rho^*} + \frac{\rho'}{1-\rho^*} \ln \frac{\rho^*}{\rho'} \leq  \mu,$$
where the last inequality is by the definitions of $\mu,\rho^*$, since $\rho' \leq \rho$, {and since the LHS of Equation~\eqref{eq:rho} is an increasing function.}

\paragraph{Case 2:} The threshold $T$ satisfies $(1-\eps)^{n-T} > \rho$. 
Then, let $T^*= 1$.
Consider an arrival order $\pi$, where
  \[
    \pi_i =
    \begin{cases*}
      \mbox{Box } T+i & if $0 \leq i \leq   n-T$ \\
      \mbox{Box } n-i & if
      $i >  n-T$.
    \end{cases*}
  \]
Let $Y_1$ (respectively, $Y_2$) denote the random variable indicating the number of non-zero realized values of at least $T$ (respectively, below $T$).
Thus, 
\begin{eqnarray}\label{eq:alg4}
\alg_T(\pi) & = & \Prx{Y_1=1} \approx (n-T) \cdot \eps \cdot (1-\eps)^{n-T}  \nonumber \\ 
& = & \ln_{1-\eps} \rho' \cdot \eps \cdot \rho' \approx \frac{\rho'}{\ln(1/\rho')} .
\end{eqnarray}
On the other hand, 
\begin{eqnarray}\label{eq:opt4}
\alg_{T^*}(\pi) & = & \Prx{Y_1=1} + \Prx{Y_1=0 \wedge Y_2 >0} \nonumber \\
& \approx  & \frac{\rho'}{\ln(1/\rho')} + (1-\eps)^{n-T} = \frac{\rho'}{\ln(1/\rho')} + \rho'.
\end{eqnarray}
Combining Equations~\eqref{eq:alg4} and \eqref{eq:opt4}, we get that
$$ \frac{\alg_T(\pi)}{\alg_{T^*}(\pi) } \approx \frac{\frac{\rho'}{\ln(1/\rho')}}{\frac{\rho'}{\ln(1/\rho')} + \rho'} = \frac{\ln(1/\rho')}{\ln(1/\rho') + 1} \leq  \mu,$$
where the last inequality is by the definition of $\mu$, since $\rho'>\rho$, {and since the RHS of Equation~\eqref{eq:rho} is a decreasing function.}
\end{proof}

\section{Open Problems}

Our model and results suggest natural problems for future research.
\begin{enumerate}
\item 
Our bounds are tight with respect to deterministic algorithms. 
Can randomized algorithms provide better ratios? 
\item
Study the order-competitive ratio in combinatorial settings, where multiple elements can be accepted, subject to feasibility constraints.  
A clear candidate is matroid feasibility constraints, for which the competitive ratio of $1/2$ with respect to the prophet benchmark carries over \citep{kleinberg2012matroid}. 
Our bounds for the expected value objective carry over to simple matroid settings, such as partition matroids (where a single element is chosen from each part). 
What is the order-competitive ratio for general matroids?

\item
More generally, we believe that the order-competitive ratio is a meaningful measure, which captures the significance of knowing the arrival order in Bayesian online settings. It would be interesting to apply it to other Bayesian settings.
\item The algorithm proposed by \citep{DBLP:conf/sigecom/PapadimitriouPS21} for online bipartite matching is order-aware. Can the same result be obtained by an order-unaware algorithm?


\end{enumerate}


\bibliographystyle{abbrvnat}
\bibliography{order}

\begin{thebibliography}{44}
\providecommand{\natexlab}[1]{#1}
\providecommand{\url}[1]{\texttt{#1}}
\expandafter\ifx\csname urlstyle\endcsname\relax
  \providecommand{\doi}[1]{doi: #1}\else
  \providecommand{\doi}{doi: \begingroup \urlstyle{rm}\Url}\fi

\bibitem[Agrawal et~al.(2020)Agrawal, Sethuraman, and Zhang]{AgrawalSZ20}
S.~Agrawal, J.~Sethuraman, and X.~Zhang.
\newblock On optimal ordering in the optimal stopping problem.
\newblock In P.~Bir{\'{o}}, J.~D. Hartline, M.~Ostrovsky, and A.~D. Procaccia,
  editors, \emph{{EC} '20: The 21st {ACM} Conference on Economics and
  Computation, Virtual Event, Hungary, July 13-17, 2020}, pages 187--188.
  {ACM}, 2020.

\bibitem[Arsenis et~al.(2021)Arsenis, Drosis, and
  Kleinberg]{arsenis2021constrained}
M.~Arsenis, O.~Drosis, and R.~Kleinberg.
\newblock Constrained-order prophet inequalities.
\newblock In \emph{Proceedings of the 2021 ACM-SIAM Symposium on Discrete
  Algorithms (SODA)}, pages 2034--2046. SIAM, 2021.

\bibitem[Azar et~al.(2014)Azar, Kleinberg, and Weinberg]{azar2014prophet}
P.~D. Azar, R.~Kleinberg, and S.~M. Weinberg.
\newblock Prophet inequalities with limited information.
\newblock In \emph{Proceedings of the twenty-fifth annual ACM-SIAM symposium on
  Discrete algorithms}, pages 1358--1377. SIAM, 2014.

\bibitem[Azar et~al.(2018)Azar, Chiplunkar, and Kaplan]{azar2018prophet}
Y.~Azar, A.~Chiplunkar, and H.~Kaplan.
\newblock Prophet secretary: Surpassing the 1-1/e barrier.
\newblock In \emph{Proceedings of the 2018 ACM Conference on Economics and
  Computation}, pages 303--318, 2018.

\bibitem[Beyhaghi et~al.(2018)Beyhaghi, Golrezaei, Leme, Pal, and
  Sivan]{beyhaghi2018improved}
H.~Beyhaghi, N.~Golrezaei, R.~P. Leme, M.~Pal, and B.~Sivan.
\newblock Improved approximations for free-order prophets and second-price
  auctions.
\newblock \emph{arXiv preprint arXiv:1807.03435}, 2018.

\bibitem[Braverman et~al.(2022)Braverman, Derakhshan, and
  Lovett]{https://doi.org/10.48550/arxiv.2206.01270}
M.~Braverman, M.~Derakhshan, and A.~M. Lovett.
\newblock Max-weight online stochastic matching: Improved approximations
  against the online benchmark, 2022.
\newblock URL \url{https://arxiv.org/abs/2206.01270}.

\bibitem[Caramanis et~al.(2022)Caramanis, D{\"u}tting, Faw, Fusco, Lazos,
  Leonardi, Papadigenopoulos, Pountourakis, and
  Reiffenh{\"a}user]{caramanis2022single}
C.~Caramanis, P.~D{\"u}tting, M.~Faw, F.~Fusco, P.~Lazos, S.~Leonardi,
  O.~Papadigenopoulos, E.~Pountourakis, and R.~Reiffenh{\"a}user.
\newblock Single-sample prophet inequalities via greedy-ordered selection.
\newblock In \emph{Proceedings of the 2022 Annual ACM-SIAM Symposium on
  Discrete Algorithms (SODA)}, pages 1298--1325. SIAM, 2022.

\bibitem[Correa et~al.(2019)Correa, D{\"u}tting, Fischer, and
  Schewior]{correa2019prophet}
J.~Correa, P.~D{\"u}tting, F.~Fischer, and K.~Schewior.
\newblock Prophet inequalities for iid random variables from an unknown
  distribution.
\newblock In \emph{Proceedings of the 2019 ACM Conference on Economics and
  Computation, EC}, pages 3--17. {ACM}, 2019.

\bibitem[Correa et~al.(2021{\natexlab{a}})Correa, Cristi, Feuilloley,
  Oosterwijk, and Tsigonias-Dimitriadis]{correa2021secretary}
J.~Correa, A.~Cristi, L.~Feuilloley, T.~Oosterwijk, and
  A.~Tsigonias-Dimitriadis.
\newblock The secretary problem with independent sampling.
\newblock In \emph{Proceedings of the 2021 ACM-SIAM Symposium on Discrete
  Algorithms (SODA)}, pages 2047--2058. SIAM, 2021{\natexlab{a}}.

\bibitem[Correa et~al.(2020)Correa, Cristi, Epstein, and Soto]{correa2020two}
J.~R. Correa, A.~Cristi, B.~Epstein, and J.~A. Soto.
\newblock The two-sided game of googol and sample-based prophet inequalities.
\newblock In \emph{{SODA}}, pages 2066--2081. SIAM, 2020.

\bibitem[Correa et~al.(2021{\natexlab{b}})Correa, Saona, and
  Ziliotto]{CorreaSZ21}
J.~R. Correa, R.~Saona, and B.~Ziliotto.
\newblock Prophet secretary through blind strategies.
\newblock \emph{Math. Program.}, 190\penalty0 (1):\penalty0 483--521,
  2021{\natexlab{b}}.

\bibitem[D{\"u}tting and Kleinberg(2015)]{dutting2015polymatroid}
P.~D{\"u}tting and R.~Kleinberg.
\newblock Polymatroid prophet inequalities.
\newblock In \emph{Algorithms-ESA 2015}, pages 437--449. Springer, 2015.

\bibitem[D{\"{u}}tting et~al.(2020)D{\"{u}}tting, Feldman, Kesselheim, and
  Lucier]{DuttingFKL20}
P.~D{\"{u}}tting, M.~Feldman, T.~Kesselheim, and B.~Lucier.
\newblock Prophet inequalities made easy: Stochastic optimization by pricing
  nonstochastic inputs.
\newblock \emph{{SIAM} J. Comput.}, 49\penalty0 (3):\penalty0 540--582, 2020.

\bibitem[D{\"u}tting et~al.(2020)D{\"u}tting, Kesselheim, and
  Lucier]{dutting2020log}
P.~D{\"u}tting, T.~Kesselheim, and B.~Lucier.
\newblock An o (log log m) prophet inequality for subadditive combinatorial
  auctions.
\newblock \emph{ACM SIGecom Exchanges}, 18\penalty0 (2):\penalty0 32--37, 2020.

\bibitem[D{\"u}tting et~al.(2021)D{\"u}tting, Lattanzi, Paes~Leme, and
  Vassilvitskii]{dutting2021secretaries}
P.~D{\"u}tting, S.~Lattanzi, R.~Paes~Leme, and S.~Vassilvitskii.
\newblock Secretaries with advice.
\newblock In \emph{Proceedings of the 22nd ACM Conference on Economics and
  Computation}, pages 409--429, 2021.

\bibitem[Ehsani et~al.(2018)Ehsani, Hajiaghayi, Kesselheim, and
  Singla]{ehsani2018prophet}
S.~Ehsani, M.~Hajiaghayi, T.~Kesselheim, and S.~Singla.
\newblock Prophet secretary for combinatorial auctions and matroids.
\newblock In \emph{Proceedings of the twenty-ninth annual acm-siam symposium on
  discrete algorithms}, pages 700--714. SIAM, 2018.

\bibitem[Esfandiari et~al.(2017)Esfandiari, Hajiaghayi, Liaghat, and
  Monemizadeh]{esfandiari2017prophet}
H.~Esfandiari, M.~Hajiaghayi, V.~Liaghat, and M.~Monemizadeh.
\newblock Prophet secretary.
\newblock \emph{SIAM Journal on Discrete Mathematics}, 31\penalty0
  (3):\penalty0 1685--1701, 2017.

\bibitem[Esfandiari et~al.(2020)Esfandiari, Hajiaghayi, Lucier, and
  Mitzenmacher]{esfandiari}
H.~Esfandiari, M.~Hajiaghayi, B.~Lucier, and M.~Mitzenmacher.
\newblock Prophets, secretaries, and maximizing the probability of choosing the
  best.
\newblock In S.~Chiappa and R.~Calandra, editors, \emph{The 23rd International
  Conference on Artificial Intelligence and Statistics, {AISTATS} 2020, 26-28
  August 2020, Online [Palermo, Sicily, Italy]}, volume 108 of
  \emph{Proceedings of Machine Learning Research}, pages 3717--3727. {PMLR},
  2020.
\newblock URL \url{http://proceedings.mlr.press/v108/esfandiari20a.html}.

\bibitem[Ezra et~al.(2018)Ezra, Feldman, and Nehama]{ezra2018prophets}
T.~Ezra, M.~Feldman, and I.~Nehama.
\newblock Prophets and secretaries with overbooking.
\newblock In \emph{Proceedings of the 2018 {ACM} Conference on Economics and
  Computation, {EC}}, pages 319--320. {ACM}, 2018.

\bibitem[Ezra et~al.(2020)Ezra, Feldman, Gravin, and Tang]{EzraFGT20}
T.~Ezra, M.~Feldman, N.~Gravin, and Z.~G. Tang.
\newblock Online stochastic max-weight matching: Prophet inequality for vertex
  and edge arrival models.
\newblock In \emph{{EC}}, pages 769--787. {ACM}, 2020.

\bibitem[Feldman et~al.(2015)Feldman, Gravin, and Lucier]{FeldmanGL15}
M.~Feldman, N.~Gravin, and B.~Lucier.
\newblock Combinatorial auctions via posted prices.
\newblock In P.~Indyk, editor, \emph{Proceedings of the Twenty-Sixth Annual
  {ACM-SIAM} Symposium on Discrete Algorithms, {SODA} 2015, San Diego, CA, USA,
  January 4-6, 2015}, pages 123--135. {SIAM}, 2015.

\bibitem[Ferguson(1989)]{ferguson1989solved}
T.~S. Ferguson.
\newblock Who solved the secretary problem?
\newblock \emph{Statistical science}, 4\penalty0 (3):\penalty0 282--289, 1989.

\bibitem[Garg et~al.(2008)Garg, Gupta, Leonardi, and
  Sankowski]{DBLP:conf/soda/GargGLS08}
N.~Garg, A.~Gupta, S.~Leonardi, and P.~Sankowski.
\newblock Stochastic analyses for online combinatorial optimization problems.
\newblock In S.~Teng, editor, \emph{Proceedings of the Nineteenth Annual
  {ACM-SIAM} Symposium on Discrete Algorithms, {SODA} 2008, San Francisco,
  California, USA, January 20-22, 2008}, pages 942--951. {SIAM}, 2008.
\newblock URL \url{http://dl.acm.org/citation.cfm?id=1347082.1347185}.

\bibitem[Gilbert and Mosteller(1966)]{Gilbert1966}
J.~P. Gilbert and F.~Mosteller.
\newblock Recognizing the maximum of a sequence.
\newblock \emph{Journal of the American Statistical Association}, 61\penalty0
  (313):\penalty0 35--73, 1966.

\bibitem[Gravin and Wang(2019)]{GravinW19}
N.~Gravin and H.~Wang.
\newblock Prophet inequality for bipartite matching: Merits of being simple and
  non adaptive.
\newblock In \emph{{EC}}, pages 93--109. {ACM}, 2019.

\bibitem[Hajiaghayi et~al.(2007)Hajiaghayi, Kleinberg, and
  Sandholm]{hajiaghayi2007automated}
M.~T. Hajiaghayi, R.~Kleinberg, and T.~Sandholm.
\newblock Automated online mechanism design and prophet inequalities.
\newblock In \emph{AAAI}, volume~7, pages 58--65, 2007.

\bibitem[Kaplan et~al.(2020)Kaplan, Naori, and Raz]{kaplan2020competitive}
H.~Kaplan, D.~Naori, and D.~Raz.
\newblock Competitive analysis with a sample and the secretary problem.
\newblock In \emph{Proceedings of the Fourteenth Annual ACM-SIAM Symposium on
  Discrete Algorithms}, pages 2082--2095. SIAM, 2020.

\bibitem[Kaplan et~al.(2022)Kaplan, Naori, and Raz]{kaplan2022online}
H.~Kaplan, D.~Naori, and D.~Raz.
\newblock Online weighted matching with a sample.
\newblock In \emph{Proceedings of the 2022 Annual ACM-SIAM Symposium on
  Discrete Algorithms (SODA)}, pages 1247--1272. SIAM, 2022.

\bibitem[Kennedy(1985)]{kennedy1985optimal}
D.~P. Kennedy.
\newblock Optimal stopping of independent random variables and maximizing
  prophets.
\newblock \emph{The Annals of Probability}, pages 566--571, 1985.

\bibitem[Kennedy(1987)]{kennedy1987prophet}
D.~P. Kennedy.
\newblock Prophet-type inequalities for multi-choice optimal stopping.
\newblock \emph{Stochastic Processes and their applications}, 24\penalty0
  (1):\penalty0 77--88, 1987.

\bibitem[Kertz(1986)]{kertz1986comparison}
R.~P. Kertz.
\newblock Comparison of optimal value and constrained maxima expectations for
  independent random variables.
\newblock \emph{Advances in applied probability}, 18\penalty0 (2):\penalty0
  311--340, 1986.

\bibitem[Kessel et~al.(2021)Kessel, Saberi, Shameli, and
  Wajc]{kessel2021stationary}
K.~Kessel, A.~Saberi, A.~Shameli, and D.~Wajc.
\newblock The stationary prophet inequality problem.
\newblock \emph{arXiv preprint arXiv:2107.10516}, 2021.

\bibitem[Kleinberg and Weinberg(2012)]{kleinberg2012matroid}
R.~Kleinberg and S.~M. Weinberg.
\newblock Matroid prophet inequalities.
\newblock In \emph{Proceedings of the forty-fourth annual ACM symposium on
  Theory of computing}, pages 123--136, 2012.

\bibitem[Kleinberg and Weinberg(2019)]{KleinbergW19}
R.~Kleinberg and S.~M. Weinberg.
\newblock Matroid prophet inequalities and applications to multi-dimensional
  mechanism design.
\newblock \emph{Games and Economic Behavior}, 113:\penalty0 97--115, 2019.

\bibitem[Krengel and Sucheston(1977)]{krengel1977semiamarts}
U.~Krengel and L.~Sucheston.
\newblock Semiamarts and finite values.
\newblock \emph{Bulletin of the American Mathematical Society}, 83\penalty0
  (4):\penalty0 745--747, 1977.

\bibitem[Krengel and Sucheston(1978)]{krengel1978semiamarts}
U.~Krengel and L.~Sucheston.
\newblock On semiamarts, amarts, and processes with finite value.
\newblock \emph{Probability on Banach spaces}, 4:\penalty0 197--266, 1978.

\bibitem[Niazadeh et~al.(2018)Niazadeh, Saberi, and
  Shameli]{DBLP:conf/wine/NiazadehSS18}
R.~Niazadeh, A.~Saberi, and A.~Shameli.
\newblock Prophet inequalities vs. approximating optimum online.
\newblock In G.~Christodoulou and T.~Harks, editors, \emph{Web and Internet
  Economics - 14th International Conference, {WINE} 2018, Oxford, UK, December
  15-17, 2018, Proceedings}, volume 11316 of \emph{Lecture Notes in Computer
  Science}, pages 356--374. Springer, 2018.

\bibitem[Papadimitriou et~al.(2021)Papadimitriou, Pollner, Saberi, and
  Wajc]{DBLP:conf/sigecom/PapadimitriouPS21}
C.~H. Papadimitriou, T.~Pollner, A.~Saberi, and D.~Wajc.
\newblock Online stochastic max-weight bipartite matching: Beyond prophet
  inequalities.
\newblock In P.~Bir{\'{o}}, S.~Chawla, and F.~Echenique, editors, \emph{{EC}
  '21: The 22nd {ACM} Conference on Economics and Computation, Budapest,
  Hungary, July 18-23, 2021}, pages 763--764. {ACM}, 2021.

\bibitem[Peng and Tang(2022)]{PengT22}
B.~Peng and Z.~G. Tang.
\newblock Order selection prophet inequality: From threshold optimization to
  arrival time design.
\newblock \emph{to appear in FOCS}, 2022.

\bibitem[Roughgarden(2021)]{roughgarden2021beyond}
T.~Roughgarden.
\newblock \emph{Beyond the worst-case analysis of algorithms}.
\newblock Cambridge University Press, 2021.

\bibitem[Rubinstein(2016)]{rubinstein2016beyond}
A.~Rubinstein.
\newblock Beyond matroids: Secretary problem and prophet inequality with
  general constraints.
\newblock In \emph{Proceedings of the forty-eighth annual ACM symposium on
  Theory of Computing}, pages 324--332, 2016.

\bibitem[Rubinstein et~al.(2019)Rubinstein, Wang, and
  Weinberg]{rubinstein2019optimal}
A.~Rubinstein, J.~Z. Wang, and S.~M. Weinberg.
\newblock Optimal single-choice prophet inequalities from samples.
\newblock \emph{arXiv preprint arXiv:1911.07945}, 2019.

\bibitem[Saberi and Wajc(2021)]{SaberiW21}
A.~Saberi and D.~Wajc.
\newblock The greedy algorithm is not optimal for on-line edge coloring.
\newblock In \emph{{ICALP}}, volume 198 of \emph{LIPIcs}, pages 109:1--109:18.
  Schloss Dagstuhl - Leibniz-Zentrum f{\"{u}}r Informatik, 2021.

\bibitem[Samuel-Cahn(1984)]{samuel1984comparison}
E.~Samuel-Cahn.
\newblock Comparison of threshold stop rules and maximum for independent
  nonnegative random variables.
\newblock \emph{the Annals of Probability}, pages 1213--1216, 1984.

\end{thebibliography}

\end{document}